\newcommand{\citelink}[2]{\hyperlink{cite.\therefsection @#1}{#2}}
\newcommand{\qcite}[1]{\citelink{#1}{\citeauthor{#1} (\citeyear{#1})}}
\newcommand{\qcitenp}[1]{\citelink{#1}{\citeauthor{#1} \citeyear{#1}}}
\newcommand{\ubar}[1]{\text{\b{$#1$}}}
\newtheorem{theorem}{Theorem}
\newtheorem{proposition}{Proposition}
\theoremstyle{definition}
\newtheorem{corollary}{Corollary}
\newtheorem{assumption}{Assumption}
\title{A Dynamic Matching Framework for Faster Child Adoptions}
\author{Terence Highsmith II\thanks{Kellogg School of Management at Northwestern University (email: \url{terence.highsmith@kellogg.northwestern.edu}). I thank James Schummer and George Georgiadis for invaluable guidance throughout this project as my advisors. I also thank Nemanja Antic, Alexander Jakobsen, Benjamin Friedrich, Nicola Bianchi, Daniel Barron, Guilherme Neri, Victor Matta, and Michael Powell for insightful feedback and comments.}}
\date{\vspace{-3em}}
\begin{document}
\maketitle
\begin{center}
    \begin{minipage}{0.9\textwidth}
\textit{Caseworkers in foster care systems match waiting children to adoptive homes. We use dynamic matching market design to characterize a class of mechanisms that incentivize expedient matches that homes can accept or decline. We design mechanisms satisfying fairness and limited strategy-proofness. They also avoid costly patience. Our empirically-based simulations suggest the mechanisms could increase adoptions by at least 25\% versus the status quo. A naive dynamic extension of Deferred Acceptance does not attain these benefits. Our mechanisms sidestep direct preference elicitation by predicting preferences, and they are robust to prediction error.} (JEL D47)	
\end{minipage}
\end{center}

\vspace{1em}

\lettrine[lraise=0.3, lines=2, findent=3pt, nindent=0pt]{T}{he} growing literature on dynamic matching theory references foster care and child adoption as a prime example of a dynamic matching market. Nevertheless, there are no applicable mechanisms to match waiting children with adoptive homes. We break this theory-application barrier. Existing tools from dynamic matching do not satisfy the necessary constraints for adoption from child welfare systems, and existing tools from the field struggle to provide consistent, efficient outcomes. Our paper develops a fitting framework and axioms for when children and homes arrive to the system over time, the authority may unilaterally allow or disallow some matches, and homes can choose to accept or decline the authority's proposed match. 

In 2022, about 109 thousand children were waiting for adoption. The median child had been waiting for over 29 months (\qcitenp{afcars}). A systemic review of outcomes for children languishing in foster care aptly states, "outcomes of foster youth are troubling on all domains" (\qcitenp{gypen-2017}). Interwoven mechanisms drive longer waits for these children, but one clear flaw is a lack of systems that prioritize efficient matching. At present, the process is very labor intensive, and only some counties responsible for waiting children use matching tools that could expedite the process. The matching tools that do exist are often simple spreadsheets or predictive measures that do not consider capacity utilization, fairness, nor speed. \qcite{hansen-2006} note that in some cases, states have had oversupplies of willing adopters while simultaneously maintaining large populations of waiting children.

Using these institutional drawbacks as motivation, we develop matching mechanisms that practitioners can employ to better dynamically manage waiting children populations and achieve more and faster adoptions. In our framework, we model a dynamic matching market where children and homes arrive over time. As the matchmaker effectively has unilateral rights over placing children in homes, we model the homes' strategic decisions as accepting or declining proposed adoptions\footnote{Child welfare authorities and children are well-aligned. The authorities have enormous leeway in placement decisions when children are in foster care, and children, not having a firm idea of how to evaluate a home, most often follow the authorities' decisions. If a child prefers some homes over others, the authorities take this into consideration in their own preferences.}. Children and homes both have cardinal utility and waiting costs; these forces create dynamic tradeoffs between accepting a subpar match in the present versus waiting for the perfect match in the future. We focus on analyzing which classic properties our mechanisms can satisfy: fairness (stability), no unfilled capacity, individual rationality, and strategy-proofness.

We introduce important new properties for matching mechanisms: patience-freeness, weak fairness, (weak) non-wastefulness, and capacity efficiency. Given time discounting, a matching mechanism is patience-free if, after any history of matches $m_{t-1}$, the matching that the mechanism produces at time $t$ is weakly less preferred to earlier matches at $k < t$ for all homes. Implementing patience-freeness is impossible if the mechanism is fair (Proposition \ref{pr:noexistence}). Instead, we focus on weakly fair mechanisms that produce matchings with no child and home pair, where the home is matched, that would mutually prefer each other rather than their matches. These two properties allow the matchmaker to use delays or less preferred placements to create dynamic incentives for homes to accept early placements. Weak non-wastefulness and capacity efficiency then restrict the mechanism to matches that do not allow too many unmatched homes\footnote{Whereas, a fair matching always matches homes whenever possible.}. We design a mechanism that efficiently utilizes capacity in the class of weakly fair and patience-free mechanisms (Theorem \ref{th:ADA}).

Mechanisms creating dynamic incentives for homes to accept earlier placements appear similar to policies implemented in a select few counties, showing that they may be feasible in practice\footnote{These penalties are rare in child welfare, but they have appeared in at least one state. In \qcite{slaugh-2016}, the authors note that the Pennsylvania state matchmaker implemented a policy to wait thirty days between successive placements for the same home when the home rejected a placement.}. We also prove that patience-free mechanisms induce a weakly dominant strategy for homes to accept the first placement that the matchmaker offers (Proposition \ref{pr:dominant}). Unlike many other stability notions in dynamic matching literature, a weakly fair and patience-free mechanism always exists (Corollary \ref{cr:existence}). In the second half of our theoretical developments, we provide multiple results on the strategic incentives of our mechanisms. We find that not all are strategy-proof: homes might misrepresent their desire to accept certain placements. We devise a mechanism that attains limited strategy-proofness (Theorem \ref{th:HEDA}), although manipulable mechanisms may marginally increase placements.

We use a national dataset to simulation counterfactual outcomes under our mechanisms. Our simulations sample random counties, children, and homes from the empirical distribution to "replay" the matching market under our mechanisms, sequential Deferred Acceptance (SDA), and decentralized matching. We obtain data from \qcite{highsmith2024matching} that includes an objective measure for children's welfare, and we simulate multiple potential preferences for homes. We find that patience-free mechanisms create at least 25\% more placements versus decentralized matching, and whereas we cannot statistically distinguish SDA from decentralized matching in most preference models. Our mechanisms increase placements for teenagers by at least 16\% when homes have vertical preferences inversely correlated with age. They increase placements for children with disabilities at least 14\% when homes have weakly negative, horizontally differentiated preferences over disability status. Both groups of children typically wait longest for adoptions under the status quo.

Our theoretical developments show that although some strategic preference reports can be pre-empted, we cannot elicit exact cardinal values. Instead, our mechanisms use a simple estimator to predict preferences that any practitioner can feasibly implement. Even when the estimator is only loosely correlated with true preferences, our mechanisms still out-perform decentralized matching and SDA, maintaining the lower bounds mentioned above. Our mechanisms likewise significantly decrease the costs that counties bear while caring for waiting children.

Lastly, we provide insights on practically implementing our system within child welfare systems. We discuss strategies and logistical challenges for switching from decentralized to centralized matching, the advantages of predicting preferences and methods to optimize prediction, and potential concerns for strategic manipulation when using inaccurate predictors.

Theoretically, our model framework borrows prominently from \qcite{baccara-2020} and our approach to the solution concept from \qcite{doval-2022}. The former paper analyzes the optimal market thickness in two-sided dynamic matching problems. We add generality in agents' preferences and agents' arrival to the market to capture relevant aspects of child adoption from foster care. In practice, county authorities have little to no desire to wait for thicker markets. Instead, they would prefer to expedite children's exits from the foster care system. Timely participation is exactly the focus of the latter paper. It departs from optimal dynamic matching to spotlight dynamic stability, a concept that ensures that both sides of the market follow the matchmaker's recommendation at the right times. However, dynamic stability is stringent, and, as Doval points out, whether or not an algorithm to compute dynamically stable matchings exists is an open question. Our setting allows us to use a more relaxed fairness concept because one side of the market---the county authority---is a guaranteed participant and can block homes from achieving undesired placements.

Our framework applies when any authority wants to motivate faster matches. Homeless shelters are one example. Homeless clients in crisis can contact county services to request shelter, and the county authority chooses a shelter to match the request. Evaluation reports from counties suggest that the third most common reason for request denials is shelter eligibility requirements, and the same reports cite this as a challenge (\qcitenp{barthel2020missed}). A shelter can require clients to have a specific level of income, age, gender, or a number of other characteristics. A client with a request today that is acceptable might be refused in hopes that a better client will be referred tomorrow. In a context where expedient matching is critical, this is undesirable. Another case is deceased-donor organ allocation. Much like adoption, many patients in need experience long wait times and shortages. Direct evidence shows that 13\% of deceased-donor kidneys are rejected over 100 times before transplantation, and these delays lead to discarded organs (\qcitenp{king2021declined}). Our mechanisms can curb dynamic incentives for waiting for more preferred clients or kidneys with minimal, fair wastefulness. 

\subsubsection*{Institutional Details}

In the United States, children that are victims of abuse or neglect enter into the local child welfare---alternatively, foster care---system. Typically, the local county manages these systems. Once a child is in the system, they enter the management of a caseworker. A worker's first priority is to reunify the child with the home of origin after rectifying the situation that warranted the child entering foster care. Failing this, many counties attempt to place the child with kin of the family of origin. Finally, if the county cannot reunify or place with kin, they must enter a legally laborious process to terminate parental rights (TPR) before placing the child with a permanent adoptive home. Federal law requires that workers pursue TPR if the child has been out-of-home for twelve consecutive months or fifteen of twenty two consecutive months.

After TPR, if a foster family has cared for the child, the worker typically attempts to place the child with that caregiver. Unfortunately, this may not always be possible. When it is not, the worker undergoes an intensive search for a new permanent home. Many counties use a \textit{family-driven} matching process that strongly resembles a Serial Dictatorship. The caseworker announces that some children require an adoptive home to a pre-specified pool of adoptive homes. Each home sequentially initiates the matching process with the home's most preferred child that still requires a match. Although the order of the dictatorship is first-come-first-serve, we model this as a random order. Relatively fewer counties follow a primarily \textit{caseworker-driven} matching process. The caseworker representing a child searches through specific adoptive homes and reaches out to initiate the match. \qcite{dierks-search-2024} show that, generally, caseworker-driven search is superior to family-driven search, and, in some ways, our results echo this because the county always leads in proposing matches with a centralized clearinghouse.

Our work fits into this final step. Existing match tools do not use deep insights from matching theory. However, a child welfare system cannot maximize placements without considering that utilizing some homes for some children may necessitate \textit{not} utilizing those same homes for other children. A combinatorial problem can drastically harm the number of placements made. Our work on implementing centralization to preempt dynamic incentives is an attempt to address some of these issues.

Once a worker finds a suitable, willing home for a children, in most states, the home tentatively fosters the child for six months before the adoption can be finalized. Placement stability refers to a placement that ends in adoption. The greatest threat in this stage is disruption. If a home decides the child is not the right fit, they can decline to adopt. Social work and pediatric care literature highlight placement stability as a key goal for matches because disruption adversely impacts children's health (\qcitenp{rubin2007impact}). For this reason, most workers' efforts are aimed toward finding matches that will not disrupt. We refer to placement stability as persistence or non-disruption to avoid confusion with the matching-theoretic concept of stability, and we use this as our welfare measure and preferences for children.

\subsubsection*{Literature Contribution}

The literature on market design for child welfare systems is small. The intersection between matching market design and adoption is even smaller. The first notable work in this space is from \qcite{baccara-2014}. They estimate preferences for private adoptions and find significant heterogeneity. They also provide empirical evidence for discrimination against boys and African American children put up for adoption. Parallels concerns exist in the foster care domain. Their policy recommendations allude to centralizing private adoptions, and we take a similar stance with respect to permanency for foster care children. \qcite{slaugh-2016} take an operations approach to improve the Pennsylvania Adoption Exchange's matching process. The paper provides extensive descriptive and quantitative information about matching in foster care. They simulate counterfactual placements when utilizing a spreadsheet matching tool. Their results indicate that matching tools and combining geographic regions (i.e., inducing more centralization and market thickness) substantially increase placements. Matching tools and technology that specifically guide caseworkers to the right home rather than announcing potential matchings to multiple homes also improve match quality and adoption prospects (\qcitenp{dierks-2024}). This observation is key. When caseworkers announce children in need of permanency to a large pool of homes, organized and coordinated matching becomes nearly impossible. However, a caseworker-driven matching process can facilitate the centralization efforts we propose. 

A few other works, including the aforementioned, pursue dynamic matching approaches to characterize foster care adoption. All of these do so from a \textit{decentralized} perspective. \qcite{mac-donald-2023} explicitly models a fostering process that can transition to adoption. Lower subsidies to foster parents after adopting children in their care distorts incentives away from adopting children with disabilities. Children with disabilities, in general, are much less likely to find non-institutional and permanent homes. \qcite{cortes-2021} develops an empirical matching framework to estimate foster children outcomes under a policy that increases market thickness through delays and coarser regions. Compared to delaying, the decrease in match disruption rates is much larger from coarser regions.

A last group of papers study market design problems tangential to matching in foster care. \qcite{dierks-2024} collaborate with a match recommendation service to study families' misreporting behavior under different placement mechanisms in a static framework. They find that different mechanisms incorporating incentive compatibility come at the cost of placements for low needs and high needs children. \qcite{altinok-2023} design the optimal licensing structure when a regulator needs to screen families that may declare that they can care for needs beyond their capabilities. \qcite{baron-2024} use mechanism design and econometrics machinery to reallocate investigations among Child Protective Services investigators to reduce the number of children unnecessarily sent into foster care and appropriately place the children that are victims of abuse into foster care. 

This paper creates a unique bridge between dynamic matching algorithms and child welfare market design to provide the technical foundation for centralized placements and better placement recommendations in child welfare systems. It may function as a crucial framework for future works studying matching in child welfare systems. Unsurprisingly, our main results show major benefits to centralization as papers like \qcite{cortes-2021} and \qcite{slaugh-2016} suggest. We offer more insights into why decentralized matching tends to lead to systemic inefficiency with theoretical and empirical analysis. Our main results add to the literature through showing that it is possible to provide dynamic incentives for homes to accept placements earlier and sustain truth-telling as a weakly dominant strategy when the matchmaker uses predictive technology to supplement preference elicitation. Furthermore, we demonstrate our mechanisms' robustness to biased and inefficient preference estimators, offering evidence that patience-freeness is effective even when the matchmaker cannot perfectly discern homes' preferred placements. One can see our research as an argument for centralized placements in child welfare systems and as an initial foray into developing tools that practitioners can use immediately.

\subsubsection*{Paper Organization}

The rest of this paper proceeds as follows. In the first section, we present our model. In section II, we present the main theoretical results on weakly fair, patience-free, and non-wasteful mechanisms, their existence, and their theoretical guarantees. In section III, we explore our mechanisms' strategic incentives and offer a strategy-proof mechanism. In section IV, we present our empirical simulations. Finally, we conclude in section V with applied insights and discussions on operating and implementing our mechanisms.

\section*{I \hspace{10pt} Model}

We outline our model and key primitives below.

\subsection*{I.A \hspace{10pt} Primitives}

% Players

Children arrive to the market (welfare system) over a finite time horizon $T$. A group arrives at every time $t \in \{ 1, 2, ..., T \}$ which we denote as $c(t)$\footnote{For the sake of tractability, we ignore sibling groups in our analysis. We make dynamic sibling group placements the goal of future work.} with an individual child as $c \in c(t)$. A group of homes also arrive at every time $t$ which we denote as $h(t)$ with an individual home $h \in h(t)$. We think of every time period $t$ as one month, but any arbitrary interpretation of time can fit our model. 

% Preferences

A matchmaker would like to assign children to homes. A home $h$ matching with child $c$ receives utility $V_h(c)$. Conversely, a child receives utility $U_c(h)$ from the same match. In the context of child welfare systems, one should interpret a child's utility as the value of the match from the matchmaker's perspective. Following \qcite{baccara-2020}, we include additive waiting costs in our model, where we assume that a common waiting cost $w_c > 0$ incurs for every child that remains unmatched after a time $t$. A child arriving at $t$ has a time-dependent utility at time $k$ equal to $U_c^k(h) = U_c(h) - (k - t)w_c$. Likewise, homes incur a common waiting cost $w_h$ with time-dependent utility $V_h^k(c) = V_h(c) - (k - t)w_h$. Some homes might have negative utility for certain children (for example, a home that never wants to adopt a child older than 13). We say that a child $c$ is \textit{acceptable} to a home $h$ if $V_h(c) \geq 0$, and a home $h$ is acceptable to a child $c$ if $U_c(h) \geq 0$.

\begin{assumption}
    All children and homes have a strict preference ordering, that is, $\forall c$, there does not exist two $h,h'$ with $h \neq h'$ such that $U_c(h) = U_c(h')$, and, $\forall h$, there does not exist two $c,c'$ with $c \neq c'$ such that $V_h(c) = V_h(c')$.
\end{assumption}

% Actions

Our model will consider the fact that homes may accept or decline placements, specifically, each home $h$ decides to accept or decline a placement offered at time $t$ which is an action $a_h^t \in \{ 0, 1\}$ where $0$ means decline and $1$ means accept. %Matchmakers have unilateral control over childrens' placements. Voluntary participation only has bite for homes, and our patience-freeness axiom becomes a crucial feature that encourages homes to follow a matchmaker's recommendation in a timely fashion.

% Information

Our axioms will implicitly assume that all homes have deterministic knowledge about future arrivals and their types. A home's option value of waiting is endogenously formed by both the expectation of future arrivals and what mechanism would offer. We design matching mechanisms that simultaneously \textit{do not} use the future knowledge that homes possess yet still ensure timely participation. We do not pursue a Bayesian model where homes have priors over future arrivals because this would relax the matchmaker's problem and complicate analysis without significantly changing results. Assuming deterministic knowledge forces frees the mechanism we design to guarantee timely participation for homes under any possible realization of arrivals.

We also assume that the matchmaker has access to children and homes' utilities and waiting costs. Utilities and waiting costs for children are simple to determine as a child's welfare from a match and costs are typically evaluated by the matchmaker. We will show that it is not possible to fully elicit utilities for homes in our strategic analysis. Our approach involves predicting preference intensity after eliciting acceptability. We show our results' responsiveness to prediction error in our empirical section.

\subsection*{I.B \hspace{10pt} Dynamic Matching}

The set $C(t)$ contains all child arrivals up to time $t$, and the set $H(t)$ contains all home arrivals up to time $t$:
\begin{align*}
    C(t) = \bigcup_{k=1}^t c(t) \text{ and } H(t) = \bigcup_{k=1}^t h(t)
\end{align*} 
The market at time $t$ is the undirected bipartite graph and preferences
\begin{align*}
    M(t) = (C(t), H(t), E(t), U(t), V(t))
\end{align*}
where $E(t) \equiv \{ \{ c, h \} : c \in C(t), h \in H(t)\}$ is the possible edges (matches) at time $t$, $U(t) = \{ U_c(h) \}_{\forall(c,h):c \in C(t), h \in H(t)}$, and $V(t) = \{ V_h(c) \}_{\forall (h,c) : h \in H(t), c \in C(t)}$. An instance of a dynamic matching problem is a sequence of markets $M^T = \{ M(t)\}_{1 \leq t \leq T}$. The space of all problems (sequences of markets) at time $t$ is $\mathcal{M}^t$.

An admissible matching $\mu \subseteq E(t)$ at time $t$ selects one-to-one matchings, i.e., on $\mu$, each vertex in $C(t)$ and $H(t)$ only have one incident edge. We write that $c$ and $h$ are matched if $\{ c, h \} \in \mu$. Alternatively, we denote this as $\mu(c) = h$ and $\mu(h) = c$. Say that $\mu(c) = c$ if $c$ has no incident edge on $\mu$ and likewise for $h$. From hereon, we refer to an admissible matching as a matching.

Matchmakers may propose any placements they like, but homes must accept them for the match to proceed. Define $a_h(t) \equiv (a_h^k)_{1 \leq k \leq t}$ and $a(t) \equiv (a_h(t))_{\forall h}$. We call $A(t)$ the action space at $t$ where $a(t) \in A(t)$. A market matching $m_t = (a(t), \mu_1, \mu_2, ..., \mu_t)$ is a tuple consisting of actions and time-consistent matchings satisfying (1) for any agent $i$ such that $\mu_k(i) \neq i$ and $a^k_h = 1$ ($a^k_{\mu_k(i)} = 1$), we have that $\mu_j(i) = i$ for any $j > k$ and (2) $\mu_k \in E(k)$. Time-consistency implies that matchings are irreversible once accepted, which we impose through excluding matched agents in future time periods, and that the matching can only include agents that have arrived. The space of all market matchings at time $t$ is $\mathcal{G}^t$

The Revelation Principle implies that actions are unnecessary in the model. However, our analysis relies on home conjectures of what matchings might ensue if they were not to match in some time periods. It is possible to define axioms that implicitly consider such deviations, but it is considerably more difficult to describe mechanisms that condition on home actions---as ours do---that are left implicit. Furthermore, we simulate the extensive form games where homes can accept or decline placements. For these reasons, we explicitly write homes' actions.

A \textbf{dynamic mechanism} is a (deterministic) sequence of rules $q = \{q_t\}_{1 \leq t \leq T}$ creating matchings at every $t$ on the market at $t$ which may depend on the history. Formally, $q_t : \mathcal{M}^t \times \mathcal{G}^t \rightarrow E(t)$. A rule $q_t$ is feasible if its matchings are time consistent with the market history. A market matching $m_{t-1}$ endogenizes the sets of children and homes available to match at $t$ which we denote:
\begin{align*}
    C(t|m_{t-1}) &\equiv \{ c : c \in C(t) \text{ and, for all } k \leq t-1 \text{, either } \mu_k(c) = c \text{ or } a_{\mu_k(c)}^k = 0 \} \\
    H(t|m_{t-1}) &\equiv \{ h : h \in H(t) \text{ and, for all } k \leq t-1 \text{, either } \mu_k(h) = h \text{ or } a_h^k = 0\}
\end{align*}
A child remains in the market if and only if no home has adopted her up till the present, and a home remains if and only if it makes no adoptions until the present. %Homes might consider declining placements that the matchmaker proposes. We call this a deviation. A deviation from an action profile $a(T)$ for a home $h$ at a time $t$ is a profile $a^{t,h}(T)$ where $h$'s actions (potentially) differ for some time periods at or after $t$. Otherwise, all actions and matchings are as in the profile $a(T)$.

Every period that a home is unmatched, it incurs a waiting cost $w_h$, and every period a child is unmatched, she incurs a waiting cost $w_c$. Let $t_{i|m}$ denote the earliest time period for any agent $i$ such that $\mu_t(i) \neq i$. A home arriving in period $k$ receives payoff $V_h(\mu_t(h)) - (t_{h|m} - k) w_h$ or $(T - k) w_h$ if the home is never matched, and the matchmaker receives payoff $U_c(\mu_t(c)) - (t_{c|m} - k) w_c$ or $(T - k)w_c$ if $c$ is never matched. The matchmaker incurs the aggregate waiting costs for children:

\begin{center}
    $W_{C|m} \equiv \sum_{t=1}^T \sum_{c} \mathbbm{1}\{ c \in C(t|m_{t-1}) \} w_c$
\end{center}

% Timing

While our primary focus is the design of the dynamic mechanism, we specify the timing of the extensive form game induced on the mechanism below:

\begin{enumerate}
    \item The matchmaker announces $Q$.
    \item At period $t$, each home $h$ receives a placement according to $Q$.
    \item Each home $h$ simultaneously decides to accept or decline the placement.
    \item $t$ ends.
\end{enumerate}

This continues until $t = T$, at which point the game ends at (4).

\subsection*{I.C \hspace{10pt} Fairness and Efficiency}

%Economic theorists have many definitions for dynamic stability. Early works posited that blocking pairs can form inter-temporally, and a satisfactory dynamic matching mechanism must preclude the formation of present and future blocking pairs. \qcite{doval-2022} counters, citing that dynamic cores induce absurd conclusions and are often empty. Instead, she suggests that a participant in the market reasons over the worst possible conjectures consistent with a dynamic stability notion; if the matchmaker's recommendation is superior to any outcome that could happen if the agent delays, the agent assents. We pursue a different course for our application for three reasons. First, agents have little ability to make matches outside of the market in our setting. The matchmaker decides for children, and homes cannot match without the matchmaker's consent. Second, when the matching mechanism is transparent to participants---as we propose---participants should posit conjectures that are consistent with the actual matchings that would realize under non-participation in a given time period. Third, as \qcite{doval-2022} points out, the existence of an algorithm that can compute dynamically stable matchings is a highly challenging, open question in itself.

Our motivation is developing axioms that fit our applied setting that also provides useful insights for subsequent theoretical work. Two key observations are that, first, if a matchmaker can commit to contingent matchings that are sequentially inefficient, subject to axiomatic limitations on this inefficiency, then the matchmaker can motivate timely matches. Second, requiring that participant blocking pairs are consistent with possible counterfactual assignments of the mechanism limits deviations. These two allowances imply that we do not need to pursue a notion as strict as those appearing in the literature.

Our solution is patience-freeness. First, we review familiar properties in matching markets. The following properties hold for a static set of children $C$ and static set of homes $H$.

A child $c \in C$ and home $h \in H$ are a blocking pair on $\mu$ if $U_c(h) > U_c(\mu(c))$ and $V_h(c) > V_h(\mu(h))$. $\mu$ is \textit{individually rational} on $C$ and $H$ if, for all $c \in C, h \in H$, $U_c(\mu(c)) \geq 0$ and $V_h(\mu(h)) \geq 0$. $\mu$ is \textit{fair} on $C$ and $H$ if it has no blocking pairs and is individually rational. We say that a blocking pair $c$ and $h$ also have \textit{justified envy} if $h$ has an assignment, i.e., $\mu(h) \neq h$. A matching $\mu$ is \textit{weakly fair} on $C$ and $H$ if it has no justified envy and is individually rational.

Eliminating justified envy is less demanding than preventing blocking pairs. We do not count blocking pairs where the home is unmatched. Together with our other axioms, an unmatched home's envy will be "unjustified" because the matchmaker only excludes the home to induce timely participation. Fairness is critical in this context to respect welfare for children and to respect the adoptive homes' preferences.

Our main axiom prioritizes the elimination of dynamic incentives for homes to decline placements. A matching at period $t$ for a mechanism $q$ given a history $m_{t-1}$ is $\mu_{t,m}^q = q(m_{t-1}, M^t)$. A matching $\mu_t \in m_t$ has a waiting option if there exists some $h$ such that $V_h^t(\mu_t(h)) > V_h^k(\mu_k(h))$ for some $k < t$ where $\mu_k \in m_t$ and $\mu_k(h) \neq h$. $\mu_t$ is \textit{patience-free} if it has no waiting options. We refer to the class of weakly fair and patience-free matchings at $t$ given $m_{t-1}$ as $\phi_{t,m}$. A mechanism $q$ has a given property if its matching $\mu_{t,m}^q$ has the property at every $m_{t-1}$. %The market matching $m_{t'}^{q,t,h} = (a^{t,h}(T), m_1^{q,h}, m_2^{q,h}, ..., m_{t'}^{q,h})$ with $m_k^{q,h} = q(m_{k-1}^{q,h}, M^k)$ considers an $h$-deviation from $a(T)$ at some time at or after $t$, where a matching for these market matchings is $\mu_{t'}^{q,t,h}$.

% A mechanism $q$ is \textit{dynamic envy-free} if
%     \begin{enumerate}
%         \item (Justified Envy-Free) For all $t,a(T)$, $\mu_t^q$ is justified envy-free.
%         \item (Patience-Free) For all $t$, $h$, $a(T)$, and $a^{t,h}(T)$, if $\mu_t^q(h) \neq h$, then $V^t_h(\mu_t^q(h)) \geq V^{t'}_h(\mu_{t'}^{q,t,h}(h))$ for all $t' > t$.
%         \item (Individually Rational) For all $t,a(T)$, $\mu_t^q$ is individually rational.
%     \end{enumerate}
% Condition (2), \textit{patience} embeds the idea that when
A home contemplating rejecting a placement does not "go outside" of the market to attempt to form an intertemporal blocking pair. Rather, the home reasons based on what the mechanism would deliver it if it instead participates at a later time given its knowledge of the market's evolution---this is a waiting option\footnote{Child welfare systems fit this description. The matchmaker dictates the child's placement. When a home decides not to accept a placement, it is rarely, if ever, possible for that home to foster or adopt another child that the matchmaker does not consent to.}. It accepts the present match if it is better than the future time-discounted match.

A dynamic mechanism $q$ is \textit{weakly non-wasteful} if, given that $m_{t-1}$ satisfies $a_h^k = 1$ for all $k \leq t - 1$ and $h \in H(t)$, there does not exist a $c \in C(t|m_{t-1})$ and $h \in H(t|m_{t-1})$ such that $U_c(h) \geq 0$, $V_h(c) \geq 0$, $\mu^q_{t,m}(c) = c$, and $\mu^q_{t,m}(h) = h$. $q$ is strictly non-wasteful if this holds for all $m_{t-1}$\footnote{It is immediate that a fair mechanism is strictly non-wasteful.}.

When a mechanism is (weakly) non-wasteful\footnote{We omit "weakly" except where necessary to ease exposition.}, it will always place all possible children and homes when homes accept the matchmaker's placements. Non-wastefulness is a key property that we exploit to allow the matchmaker to shape homes' dynamic incentives toward accepting the matchmaker's offers. We also use the fact that it is a weak property to dispense far-reaching conclusions for strategy-proof mechanisms (Proposition \ref{pr:noexistsp}). Nevertheless, weak non-wastefulness leaves room for mechanisms that can have excess unfilled capacity, for example, a mechanism that ends the market after any home rejects a placement. 

We introduce one last comparative axiom. A matching $\mu$ has one-sided unfilled capacity $\kappa(\mu|H) = \{ h \in H : \mu(h) = h \}$. A matching $\mu$ is capacity dominated by $\mu'$ on $H$ if $\kappa(\mu'|H) \subset \kappa(\mu|H)$. $\mu$ is $\Omega$-capacity efficient on $H$ if it is not capacity dominated by any $\mu \in \Omega$ on $H$. A mechanism $q$ is capacity efficient if $\mu_{t,m}^q$ is $\phi_{t,m}$-capacity efficient on $H(t|m_{t-1})$ for any $m_{t-1}$. $q$ satisfies this if it is not capacity dominated by a weakly fair and patience-free matching.

\section*{II \hspace{10pt} Weakly Fair and Patience-Free Mechanisms}

In this section, we first offer key existence and non-existence proofs. We also highlight how a multiplicity of mechanisms warrants further exploration. Last, we explore important implications of patience-freeness that guarantee expedient, voluntary participation among homes.

\subsection*{II.A \hspace{10pt} Existence}

Our primary efficiency measure is unfilled capacity. Our initial results show that there is tension between fairness, patience, and capacity.

\begin{proposition} \label{pr:noexistence}
    A fair and patience-free mechanism does not exist.
\end{proposition}

\begin{proof}
    We demonstrate Proposition 1 via. counterexample. We suppress $q,m$ in super- and subscripts where it is obvious from context. Consider an environment where $T = 2$, $h(1) = \{h_1\}$, $c(1) = \{c_1\}$, $h(2) = {h_2}$, and $c(2) = \{c_2\}$.

    \begin{center}
        \begin{tikzpicture}[
            roundnode/.style={circle, draw=blue!80, fill=blue!70, very thick, minimum size=6mm},
            squarednode/.style={rectangle, draw=blue!50, fill=blue!40, very thick, minimum size=7mm},
            empty/.style={rectangle, draw=white, fill=white, minimum size = 5mm},
            ]
                
            %% Nodes

            % t = 1
            \node[roundnode] (c1) {$c_1$};
            \node[squarednode] (h1) [below=of c1] {$h_1$};
            \node[empty] (t1) [above=0.5 of c1] {$t = 1$};

            %t = 2

            \node[roundnode] (c2) [right=of c1, draw=red!80, fill=red!70] {$c_2$};
            \node[squarednode] (h1) [below=of c2] {$h_2$};
            \node[empty] (t1) [above=0.5 of c2] {$t = 2$};
        \end{tikzpicture}
    \end{center}

    Players' preferences are:
    \begin{align*}
        V_{h_1}(c_2) = 2 &> V_{h_1}(c_1) = 1 \\
        V_{h_2}(c_1) = 2 &> V_{h_2}(c_2) = 1 \\
        U_{c_1}(h_2) = 3/2 &> U_{c_1}(h_1) = 1 \\
        U_{c_2}(h_1) = 3/2 &> U_{c_2}(h_2) = 1
    \end{align*}
    with $w_c = 2$ and $w_h = 1/2$. Suppose that $q$ is a fair mechanism. Then, it is also strictly non-wasteful. For any $a_h^1$, it must specify $\mu_1(h) = c_1$. Suppose that $a_h^1 = 0$, then $C(2|m_1) = \{c_1, c_2\}$ and $H(2|m_1) = \{h_1, h_2\}$. By strict non-wastefulness, $\mu_2(h_1) \in \{c_1, c_2\}$ and $\mu_2(h_2) \in \{c_1, c_2\}$. If $q$ is fair, $\mu_2$ must satisfy weak fairness, hence $\mu_2(h_1) = c_2$ and $\mu_2(h_2) = c_1$. Since every agent has positive utility for any match, this is also individually rational. However, $q$ must also satisfy patience-freeness. Patience-freeness implies that $V_{h_1}^1(\mu_1(h_1)) = 1 \geq V_{h_1}^2(\mu_2(h_1)) = 2 - 1/2 = 3/2$. This is a contradiction. Therefore, $q$ cannot be both fair and patience-free.
\end{proof}

Specifically, weak fairness, strict non-wastefulness, and patience-freeness are incompatible. Furthermore, a matchmaker would prefer to place $h_1$ with $c_1$ and $h_2$ with $c_2$. These placements would yield payoff $2$ to the matchmaker, whereas waiting until $t=2$ to match $h_1$ with $c_2$ and $h_2$ with $c_1$ yields payoff $1$ because of $c_1$'s waiting cost. The home's patience combined with strict non-wastefulness renders the matchmaker's first-best outcome impossible.

%\subsubsection*{Capacity and Dynamic Incentives}

%The fact that strict non-wastefulness requires the matchmaker to fully utilize capacity after any history incapacitates the matchmaker's ability to provide dynamic incentives to homes. The Pennsylvania Adoption Exchange (PAE), a permanency matchmaker, realized a similar connection in the context of utilizing matching delays to induce truth-telling. In \qcite{slaugh-2016}, the authors note that the exchange implemented a policy to wait thirty days between successive placements for the same home when the home rejected a placement. They believed that this might discipline homes away from overstating preferences for children they would not ultimately adopt, even though the matchmakers might, in practice, make less placements as a consequence. Later, we partially affirm their reasoning with formal results. Theorem 1 suggests that a similar policy could also induce homes to accept placements \textit{sooner} even though it can create wastefulness off-path.

\subsubsection*{Home Penalized Deferred Acceptance}

Here, we introduce our first mechanism. It is a modification of sequential Deferred Acceptance (DA) that incorporates previous placements and homes' decisions to reject those placements into present match decisions. \\~\

\noindent\textbf{Step 0}: At each $t$, initialize each child $c$'s and each home $h$'s preferences as $\Tilde{U}_c(\cdot) = U_c(\cdot)$ and $\Tilde{V}_h(\cdot) = V_h(\cdot)$, respectively.

\noindent\textbf{Step 1:} if $t > 1$, for each $h \in H(t|m_{t-1})$, if $\mu_k(h) \neq h$ for some $k < t$, then
\begin{itemize}
    \item Let $k$ be the most recent period such that $\mu_k(h) \neq h$ and $\mu_k \in m_{t-1}$. Set $p_h^t = 1$ if there exists a $c \in C(t|m_{t-1})$ such that $V_h^t(c) > V_h^k(\mu_k(h))$ and $ U_{c}(h) \geq 0$. Set $p_h^t = 0$ otherwise.
    \item If $p_h^t = 1$, then set $\Tilde{V}_h(c') < 0$ for all $c'$.
\end{itemize}

\noindent\textbf{Step 2:} Using $\Tilde{U}_c(\cdot)$ and $\Tilde{V}_h(\cdot)$ as preferences
\begin{itemize}
    \item Each $h \in H(t|m_{t-1})$ without a child holding its proposal proposes to its best, acceptable $c \in C(t|m_{t-1})$ that it has not yet proposed to, if it can propose to any.
    \item Each $c \in C(t|m_{t-1})$ holds her best, acceptable proposal and rejects all others.
    \item Repeat 1-2 until no additional proposals are made.
\end{itemize}
\noindent\textbf{Step 3:} Set $\mu_t(c)$ to be each child's held proposal; $\mu_t(c) = c$ otherwise. Set $\mu_t(h)$ to be the child holding a home's proposal; $\mu_t(h) = h$ otherwise. \\~\

\begin{theorem} \label{th:HPDA}
    Home Penalized Deferred Acceptance is weakly fair, patience-free, and non-wasteful.
\end{theorem}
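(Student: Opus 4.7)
The plan is to verify the three clauses of dynamic envy-freeness together with weak non-wastefulness. Individual rationality, justified envy-freeness, and non-wastefulness follow from standard properties of home-proposing DA applied to the (possibly truncated) preferences; patience-freeness is where the penalty device does the real work.

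For individual rationality and justified envy-freeness, observe that the matching phase runs home-proposing DA on $(\tilde U_c, \tilde V_h)$, where $\tilde U_c \equiv U_c$, and either $\tilde V_h = V_h$ or else $\tilde V_h(c) = -1$ for every $c$. In the first case any DA match is mutually acceptable under the true preferences, so $V_h(\mu_t(h)) \geq 0$ and $U_c(\mu_t(c)) \geq 0$; in the second case $h$ proposes to no one and $\mu_t(h) = h$. Hence IR holds on and off path. For justified envy, any candidate pair $(c, h)$ requires $\mu_t(h) \neq h$ by the stated definition, ruling out truncated $h$; for non-truncated $h$, $\tilde V_h = V_h$, so any envy on true preferences would also block DA on the truncated preferences, contradicting DA's stability. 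For weak non-wastefulness, suppose $a_h^{t'} = 1$ for all $t',h$. Every home that receives a non-self placement immediately exits the market, so no active home at $t>1$ has a past placement and the truncation step never fires. HPDA collapses to sequential home-proposing DA on true preferences, which is statically non-wasteful.

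For patience-freeness, fix $(t,h)$ with $\mu_t(h) \neq h$ and any $t' > t$, and consider the counterfactual profile $\hat a(t',h)$. I propose to establish by induction on $k \in \{t,\ldots,t'\}$ the invariant that either $\hat\mu_k^h(h) = h$ or $V_h^k(\hat\mu_k^h(h)) \leq V_h^t(\mu_t(h))$. The base $k = t$ requires $\hat\mu_t^h = \mu_t$, which holds because $h$'s actions at every $k' < t$ with $\mu_{k'}(h) \neq h$ must have been rejections in the original (otherwise $h$ would have exited before $t$, contradicting $\mu_t(h) \neq h$), while at times with $\mu_{k'}(h) = h$ the action is vacuous. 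For the step, let $k^\star$ be the most recent index in $[t,k]$ with $\hat\mu_{k^\star}^h(h) \neq h$, which exists since $\hat\mu_t^h(h) = \mu_t(h) \neq h$. HPDA's truncation at $k+1$ uses $V_h^{k^\star}(\hat\mu_{k^\star}^h(h))$ as its threshold. If the threshold is exceeded, $h$ is truncated and $\hat\mu_{k+1}^h(h) = h$; otherwise every child $c$ acceptable to $h$ satisfies $V_h^{k+1}(c) \leq V_h^{k^\star}(\hat\mu_{k^\star}^h(h))$, and since DA matches $h$ to such a $c$ or to no one, the bound propagates through the inductive hypothesis $V_h^{k^\star}(\hat\mu_{k^\star}^h(h)) \leq V_h^t(\mu_t(h))$. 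At $t'$, if $\hat\mu_{t'}^h(h) = h$ the patience-free inequality reduces to $V_h^t(\mu_t(h)) \geq V_h^{t'}(h)$, which holds under the natural convention that being unmatched yields nonpositive value.

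The main obstacle is the counterfactual bookkeeping: verifying the base case $\hat\mu_t^h = \mu_t$ and tracking $k^\star$ through the induction so that the truncation threshold is always anchored at a past counterfactual offer whose value is already bounded by $V_h^t(\mu_t(h))$. Once this invariant is in place, every remaining step is either a direct application of static DA's stability and non-wastefulness or an immediate invocation of the defining inequality of the truncation step.
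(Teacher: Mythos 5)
Your proof is correct and follows essentially the same route as the paper's: individual rationality, justified envy-freeness, and weak non-wastefulness via static DA properties on the constructed preferences, and patience-freeness by arguing that the counterfactual market coincides with the actual one up to $t$ and then propagating the truncation threshold (anchored at the most recent non-self placement) forward by induction. The only difference is presentational: the paper splits this into two explicit inductions (matching coincidence for $k \le t$, then the utility bound for $k \ge t$ applied to the counterfactual sequence), whereas you merge them into a single invariant with a compressed base case; the content is the same.
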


\begin{proof}
    See Appendix A.
\end{proof}

Since home-proposing DA yields a home-optimal matching, there cannot be any justified envy, which implies weak fairness. HPDA is patience-free because regardless of what placements homes have accepted or rejected, the fact that a home $h$ is active at time $t$ indicates that $h$ has received no placements or has rejected all previous placements. The matchmaker knows which case it is. In the latter, the matchmaker can delay $h$'s placements. HPDA dynamically adjusts the relative payoff between past rejected placements and present placements through delaying homes in the matching process.

% \begin{remark}\label{re:hpda-strict}
%     HPDA is weakly fair at every $t,\mathbf{a}(t)$ even when allowing a child with no placement to have justified envy.
% \end{remark}

% \begin{proof}
%     See Theorem \ref{th:HPDA} in Appendix A.
% \end{proof}

% i.e., at any $t$, there does not exist a child $c$ where potentially $\mu_t(c) = c$ and $h$ where $\mu_t(h) \neq h$ such that $U_c(h) > U_c(\mu_t(c))$ and $V_h(c) > V_h(\mu_t(h))$. Stated another way, an unmatched child has no fairness claim over a matched child.

\begin{corollary} \label{cr:existence}
    A weakly fair, patience-free, and non-wasteful mechanism always exists.
\end{corollary}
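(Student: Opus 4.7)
The plan is essentially to cite Theorem \ref{th:HPDA} directly. That theorem exhibits Home Penalized Deferred Acceptance as a concrete mechanism that is simultaneously dynamic envy-free and non-wasteful, so HPDA itself serves as a witness that such mechanisms exist. The corollary is therefore immediate once we observe that HPDA is well-defined on every admissible primitive.

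In a bit more detail, I would write: \emph{Proof.} By Theorem \ref{th:HPDA}, HPDA is dynamic envy-free and non-wasteful. It remains only to note that HPDA is well-defined as a dynamic mechanism in the sense of Section 2.2, i.e., it produces some matching $\mu_t$ at every $t$ as a function of the history $(M(k))_{k\le t}$, $(\mu_k)_{k<t}$, and $\mathbf{a}(t-1)$. This follows from the construction: the TRUNCATION step is a deterministic modification of each active home's preferences based on finitely many past rejections, and the MATCHING step is classical Deferred Acceptance on the truncated preferences over the finite sets $C(t|\mathbf{a}(t-1))$ and $H(t|\mathbf{a}(t-1))$, which always terminates in a matching. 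Thus HPDA witnesses existence. \qed

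There is no real obstacle here, since the work has already been done in Theorem \ref{th:HPDA}; the only mild subtlety is to make explicit that HPDA indeed qualifies as a dynamic mechanism under the paper's Section 2.2 definition, which is clear from the fact that both its component steps are algorithmically well-defined for any finite market. I would keep the proof to a few lines so as not to obscure the fact that the corollary is a direct consequence of the prior theorem.
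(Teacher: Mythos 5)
Your proposal is correct and takes essentially the same route as the paper, which proves the corollary by noting it follows directly from HPDA's existence (Theorem \ref{th:HPDA}). The extra remark that HPDA is well-defined as a dynamic mechanism is a harmless elaboration of the same argument.
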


\begin{proof}
    This follows directly from HPDA's existence.
\end{proof}

However, HPDA is not unique in this class of mechanisms. Proposition \ref{pr:noexistence} suggests that other mechanisms satisfying these properties must operate similarly to HPDA, imposing delays on non-compliant homes. Nevertheless, HPDA's rule is too naive. It delays a home if there is \textit{any} child that would be a waiting option, even if that child is unattainable in a weakly fair matching. 

\subsubsection*{Ascending Deferred Acceptance}

We show that there is a smarter, more capacity-efficient method to penalize homes that maintains weak fairness and patience-freeness. We introduce the Ascending Deferred Acceptance algorithm (ADA). \\~\

\noindent\textbf{Step 0}: At each $t$, initialize each child $c$'s and each home $h$'s preferences as $\tilde{U}^1_c(\cdot) = U_c(\cdot)$ and $\tilde{V}^1_h(\cdot) = V_h(\cdot)$, respectively.

\noindent\textbf{Step 1:} for any round $n \geq 1$
\begin{itemize}
    \item \noindent\textbf{Step $n$.1:} Using $\tilde{U}^n_c(\cdot)$ and $\tilde{V}^n_h(\cdot)$ as preferences
    \begin{itemize}
        \item Each $c \in C(t|m_{t-1})$ without a home holding its proposal proposes to its best, acceptable $h \in H(t|m_{t-1})$ that it has not yet proposed to, if it can propose to any.
        \item Each $h \in H(t|m_{t-1})$ holds her best, acceptable proposal and rejects all others.
        \item Repeat until no additional proposals are made.
    \end{itemize}

    \item \noindent\textbf{Step $n$.2:} For any $h$, let $c$ be the proposal that $h$ is holding. Let $k$ be the most recent period such that $\mu_k(h) \neq h$ and $\mu_k \in m_{t-1}$, if any exists. If $V_h^t(c) > V_h^k(\mu_k(h))$, then set $\tilde{V}^{n+1}_h(c') < 0$ for all $c' \in C(t)$. Otherwise, set $\tilde{V}^{n+1}_h(c') = \tilde{V}^n_h(c')$. Exit to Step 3 if $\tilde{V}^{n+1}_h(c') = \tilde{V}^n_h(c')$ for all $h$. Otherwise, continue to round $n+1$.
\end{itemize}

\noindent\textbf{Step 3:} Set $\mu_t(c)$ to be each child's held proposal; $\mu_t(c) = c$ otherwise. Set $\mu_t(h)$ to be the child holding a home's proposal; $\mu_t(h) = h$ otherwise. \\~\

\begin{theorem}\label{th:ADA}
    Ascending Deferred Acceptance is weakly fair, patience-free, non-wasteful, and capacity efficient.
\end{theorem}
\begin{proof}
    See Appendix A.
\end{proof}

We refer to ADA as \textit{ascending} because the properties of Deferred Acceptance imply that each time ADA runs child-proposing DA, it matches each home its least favorite partner in any weakly fair matching where the matched children and homes are matched. This implies that a home with a waiting option cannot be matched in \textit{any} weakly fair matching given the other matched children and homes. As ADA sequentially removes homes in each repetition, each remaining home becomes weakly better off after re-running DA, maintaining weak fairness. These properties yield the result and underwrite the rationale for utilizing child-proposing rather than home-proposing DA. ADA's capacity efficiency implies that there does not exist penalties that match more homes (in a set inclusion sense) satisfy both weak fairness and patience-freeness. Note that because DA runs in polynomial time and at least one home is removed from the ADA algorithm in each round $n$, ADA also runs in polynomial time implying that it is computationally tractable.

\subsection*{II.B \hspace{10pt} Participation}

In our interviews with a U.S. county, workers representing children express serious dismay over managing caseloads where it seems impossible to find the right home to accept a placement for a child, especially when the child has serious needs or disabilities. Our interviewees report that there is significant variation in homes' waiting times and that many homes do wait in hopes of achieving an ideal placement. Our theoretical result below shows that when a matchmaker implements a patience-free mechanism, homes do not have an incentive to reject placements.

\begin{proposition} \label{pr:dominant}
    In the extensive form game on a patience-free mechanism $q$, accepting the first placement is a weakly dominant strategy for all homes.
\end{proposition}

\begin{proof}
    See Appendix A.
\end{proof}

The proposition is immediate given the definition of patience-freeness. One implicit assumption of our model is critical to participation. The matchmaker must be able to limit homes' dynamic incentives through controlling the placement offers that a home receives. In most child welfare systems, homes register with a provider agency responsible for representing the home to the local county. The local county writes contracts with the local provider agencies that specify a number of placements for children that the providers must guarantee to the county, meaning that the provider must reserve a certain number of homes for the local county. The county we interviewed does not pursue placements out-of-county unless the prospective home is kin to the child. These features imply that the matchmaker possesses power over the placement offers for at least some homes residing in the local county. Even if homes occasionally receive outside options, we conjecture that patience-free mechanisms should maintain their efficacy as long as these options are infrequent or unattractive relative to the local authorities' placement offers.

\section*{III \hspace{10pt} Strategic Incentives}

In this section, we explore homes' strategic incentives under several mechanisms. We first propose an augmented version of strategy-proofness that considers limited misreports and give rationales. Next, we construct benchmark strategic properties using sequential home-proposing DA. Last, we give strategic properties for HPDA, ADA, and a new mechanism: Home Endowed Deferred Acceptance (HEDA).

\subsection*{III.A \hspace{10pt} Manipulability}

Practitioners report at least one major concern regarding manipulations: homes that report unwillingness to adopt \textit{any} child that does not meet stringent characteristics. In practice, these homes wait to accept placements for children that do not ultimately meet their original preference reports. In our interviews, caseworkers say that homes often broaden their preferences after experience with the system, suggesting that homes could be attempting to manipulate matchmaking through truncating their preferences. Another plausible explanation is that homes simply have imperfect information and learn over time. We believe that both mechanisms may be at work, and eliminating manipulability has promising benefits.

\begin{figure}
    \centering
    \includegraphics[width=.85\linewidth]{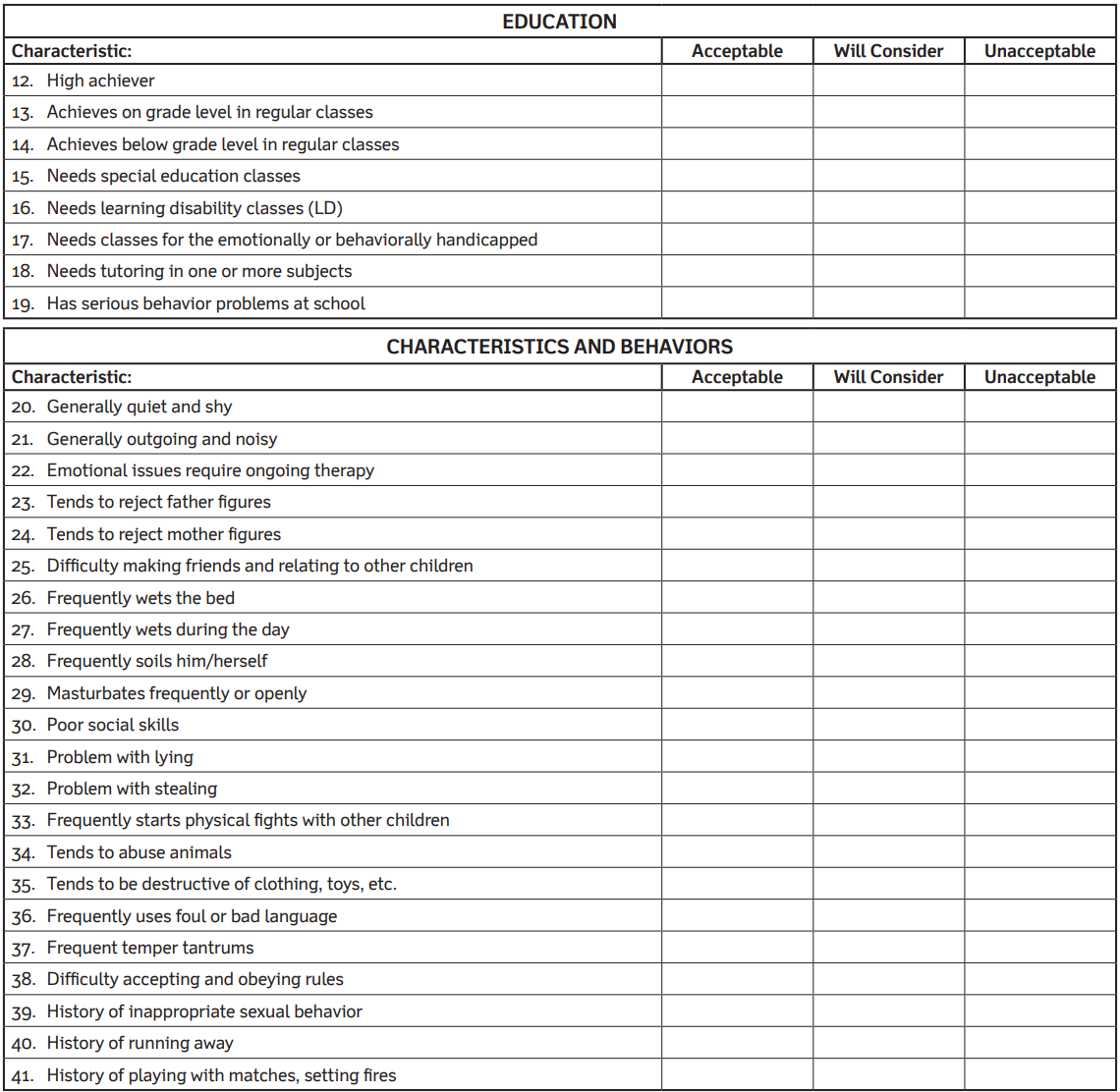}
    \caption{An example of a preference form. We treat the "Will Consider" category as "possibly acceptable".}
    \label{fig:preferences}
\end{figure}

The process to report and use preferences in matchmaking between children and homes is non-standardized across the United States. In the state where the county we interviewed resides, the statewide adoption match recommendation service requests that homes seeking a match fill a form that indicates preferences over children's characteristics (Figure \ref{fig:preferences}).

Keeping in trend with practice, we assume that a home reports acceptability for each child available. Formally, a home's preference report is a strategy $\forall h,c$:
\begin{center}
    $\sigma_h(c) = \begin{cases}
        1 &\text{ if } c \text{ is acceptable to } h \\
        0 &\text{ otherwise }
    \end{cases}$
\end{center}
Denote $\sigma_h \equiv (\sigma_h(c))_{\forall c}$, $\sigma \equiv (\sigma_h)_{\forall h}$, $\sigma_{-h} \equiv (\sigma_{h'})_{\forall h' \neq h}$, and $a_{-h}(t) \equiv (a_{h'}(t))_{\forall h' \neq h}$. The matchmaker observes $h$'s true utility for $c$ if $h$ reports that $c$ is acceptable. Otherwise, the matchmaker can only observe that $c$ is unacceptable to $h$. Preserving continuity in notation, we denote $V_h(c)$ as the utility that the matchmaker observes and $\Hat{V}_h(c)$ as $h$'s true utility for $c$. Then, we have:
\begin{center}
    $V_h(c) = \begin{cases}
        \Hat{V}_h(c) &\text{ if } \sigma_h(c) = 1 \\
        \underline{v} &\text{ otherwise }
    \end{cases}$
\end{center}
We maintain the assumption that $w_h$ is known. The matchmaker may calibrate it based on their own beliefs. Additionally, in our results, we note a variant of our strategy-proof mechanism that maintains its properties even when $w_h$ is unknown.

\begin{assumption} \label{as:nounacceptable}
    If $\Hat{V}_h(c) < 0$, then $\Hat{V}_h(c) = \underline{v}$
\end{assumption}

We set $\underline{v}$ arbitrarily low enough that a home will not accept unacceptable placements to avoid large waiting costs. This would never happen. Rather than incur further waiting costs, a home would leave the child welfare system. Assumption \ref{as:nounacceptable} accommodates this outside option. Since we only examine individually rational mechanisms, the matchmaker will not offer unacceptable placements, and this assumption is without loss of generality.

The market matching
\[m_{t,a(t),\sigma}^q = (m_{t-1,a(t-1),\sigma}^q, a(t-1), q(m_{t-1,a(t-1),\sigma}^q, M^t(\sigma)))\]
where $m_{1,a(1),\sigma}^q = q(M^1(\sigma))$ and $M^t(\sigma) = (C(t), H(t), E(t), U(t), V(t|\sigma))$ now implies that observed utilities for homes $V(t|\sigma)$ are endogenous to $\sigma$. We require the entire market matching history to be determined by $q$ to analyze strategy-proofness properties because actions and reports will endogenize realized utility. A report $\sigma_h$ is \textit{truthful} if and only if $\sigma_h(c) = 1$ when $ \Hat{V}_h(c) \geq 0$ and $\sigma_h(c) = 0$ when $\Hat{V}_h(c) < 0$. 

For the following definitions, fix a mechanism $q$, action profile $a(T)$, and reports $\sigma$. Define $t_h \equiv t_{h|m_{t,a(T),\sigma}^q}$. A home $h$ arriving at $k$ has realized utility $\Hat{V}_h^r(\mathbf{a}(T), \sigma|q) = \Hat{V}_h(\mu_{t_h}(h)) - (t_h - k)w_h$.

An action and report $(a_h(T),\sigma_h)$ are \textit{weakly dominant} for $h$ under a mechanism $q$ if
\[\Hat{V}_h^r(a_h(T), a_{-h}(T), \sigma_h, \sigma_{-h}|q) \geq \Hat{V}_h^r(a'_h(T), a_{-h}(T), \sigma'_h, \sigma_{-h}|q)\]
for all $a'_h(T), a_{-h}(T), \sigma'_h, \sigma_{-h}$. A mechanism $q$ is \textit{strategy-proof} if, for all $h$, $(a_h(T), \sigma_h)$ is weakly dominant, where $\sigma_h$ is truthful and $a_h^t = 1 \hspace{5pt} \forall t$.

Matchmakers want all homes want to report that acceptable children are acceptable and unacceptable children are unacceptable. Avoiding placing a child with a home that would not accept the child saves valuable time and resources, and correctly discerning what children are acceptable to homes assists in matching, recruitment, and planning. 

We change the timing of the game as follows:

\begin{enumerate}
    \item The matchmaker announces $q$.
    \item Each home $h$ reports $\sigma_h(c)$ for all $c$.
    \item At period $t$, each home $h$ receives a placement according to $q$ and reported preferences.
    \item Each home $h$ simultaneously decides to accept or decline the placement.
    \item $t$ ends.
\end{enumerate}

This continues until $t = T$, at which point the game ends at (5).

\subsection*{III.B \hspace{10pt} Benchmark}
In our empirical section, we  perform two comparisons. The first is patience-free placements contra decentralized placements which we model using a Random Serial Dictatorship (RSD). The second is patience-free placements contra sequentially stable placements. RSD does not guarantee any desirable properties in this setting, therefore, we focus on the second comparison in our theoretical analysis. Here, we introduce sequential Deferred Acceptance (SDA) which is a spot placement mechanism that indiscriminately runs home-proposing or child-proposing DA. In our main specifications, we use home-proposing DA. \\~\

\noindent\textbf{Step 1:} $t \geq 1$: Using $U_c(\cdot)$ and $V_h(\cdot)$ as preferences
    \begin{itemize}
        \item Each $h \in H(t|m_{t-1})$ without a child holding its proposal proposes to its best, acceptable $c \in C(t|m_{t-1})$ that it has not yet proposed to, if it can propose to any.
        \item Each $c \in C(t|m_{t-1})$ holds her best, acceptable proposal and rejects all others.
        \item Repeat 1-2 until no additional proposals are made.
        \item Set $\mu_t(c)$ to be each child's held proposal; $\mu_t(c) = c$ otherwise. Set $\mu_t(h)$ to be the child holding a home's proposal; $\mu_t(h) = h$ otherwise. \\~\
    \end{itemize}

The proposition below offers insight on theoretical differences between SDA's properties and our patience-free mechanisms.

% Blurb about planning on comparing to sequential DA?
\begin{proposition} \label{pr:DAprops}
    Sequential Deferred Acceptance is fair, but it is neither patience-free nor strategy-proof.
\end{proposition}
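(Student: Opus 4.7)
The plan is to handle the three positive properties by appealing to the classic Gale-Shapley guarantees on each period's market, and then to reuse the counterexample already constructed for Theorem \ref{th:noexistence} to refute both patience-freeness and strategy-proofness.

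First I would verify the positive properties period by period. Sequential DA runs home-proposing DA at each $t$ on the current active sets $C(t|\mathbf{a}(t-1))$ and $H(t|\mathbf{a}(t-1))$. The Gale-Shapley output on this static market is a stable matching, so it is individually rational (only mutually acceptable matches are held to conclusion) and justified envy-free (any pair $c,h$ with $V_h(c)>V_h(\mu_t(h))$ and $U_c(h)>U_c(\mu_t(c))$ would have resulted in $h$ proposing to $c$ and $c$ not rejecting $h$ in favor of $\mu_t(c)$, a contradiction). For strict non-wastefulness, suppose at some $(t,\mathbf{a}(t))$ there exist unmatched $c \in C(t|\mathbf{a}(t))$ and $h \in H(t|\mathbf{a}(t))$ with $V_h(c)\geq 0$ and $U_c(h)\geq 0$; then in DA's execution $h$ must eventually propose to $c$ (since $h$ exhausts its acceptable list without being held), and $c$ must hold $h$ (since $c$ has nothing better to hold), contradicting that both are unmatched. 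Because DA is re-run freshly at every $t$ on whatever active set the history produces, all three properties hold both on- and off-path.

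Next I would establish the two negations by reusing the environment from the proof of Theorem \ref{th:noexistence} (with $T=2$, arrivals $c_1,h_1$ at $t=1$ and $c_2,h_2$ at $t=2$, and the stated utilities). At $t=1$, sequential DA has only the pair $\{c_1,h_1\}$, so $\mu_1(h_1)=c_1$ with payoff $V_{h_1}^1(c_1)=1$. Consider the counterfactual $\hat{a}(1,h_1)$ with $a_{h_1}^1=0$: at $t=2$ the active market is $\{c_1,c_2\}\times\{h_1,h_2\}$, and DA produces $\hat{\mu}_2^{h_1}(h_1)=c_2$, $\hat{\mu}_2^{h_1}(h_2)=c_1$. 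Then
\[
V_{h_1}^2(\hat{\mu}_2^{h_1}(h_1))=2-\tfrac{1}{2}=\tfrac{3}{2}>1=V_{h_1}^1(\mu_1(h_1)),
\]
violating patience-freeness. The same construction refutes strategy-proofness: fixing truthful reports, switching from $a_{h_1}^1=1$ to the action profile $a_{h_1}^1=0,a_{h_1}^2=1$ raises $h_1$'s realized utility from $1$ to $3/2$ regardless of others' play, so the prescribed always-accept strategy is not weakly dominant.

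There is no real obstacle in this argument beyond careful bookkeeping of the active sets after a rejection at $t=1$; the heavy lifting is already carried by the classical properties of static DA and by the example constructed earlier in the paper. I would therefore keep the write-up tight: one short paragraph for the positive properties, one for plugging the Theorem \ref{th:noexistence} example into the patience-free and strategy-proof definitions.
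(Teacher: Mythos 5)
Your proof is correct, and the positive half (justified envy-freeness, individual rationality, strict non-wastefulness via the static Gale--Shapley guarantees applied to each period's active market, holding on- and off-path because DA is re-run on whatever history produces) is essentially the paper's own argument, just spelled out in more detail. Where you genuinely diverge is in the two negations. The paper argues indirectly: since sequential DA is strictly non-wasteful, Theorem \ref{th:noexistence} rules out dynamic envy-freeness, and since justified envy-freeness and individual rationality have already been verified, the failure must lie in patience-freeness; similarly, non-wastefulness plus justified envy-freeness together with Theorem \ref{th:noexistsp} rule out strategy-proofness. You instead plug the Theorem \ref{th:noexistence} environment directly into both definitions: the counterfactual $\hat{a}(2,h_1)$ gives $h_1$ the match $c_2$ at $t=2$ with discounted utility $3/2 > 1 = V_{h_1}^1(c_1)$, violating patience-freeness, and the same deviation (reject at $t=1$, accept at $t=2$, truthful reports) strictly raises realized utility above the compliant truthful pair's payoff of $1$, so that pair is not weakly dominant and strategy-proofness fails. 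Both routes are valid; the paper's buys brevity and emphasizes that these failures are forced on \emph{any} mechanism with those efficiency properties (at the cost of a forward reference to Theorem \ref{th:noexistsp}), while yours is self-contained, avoids invoking the later impossibility result, and makes the dynamic incentive to reject concrete for home-proposing DA specifically. One small point of rigor: to refute weak dominance it suffices to exhibit a single profile $a_{-h}(T),\sigma_{-h}$ together with the improving deviation, so your stronger ``regardless of others' play'' claim, while true here, is not needed and could simply be replaced by fixing $h_2$'s truthful report and any action.
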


\begin{proof}
    SDA does not condition on $m_{t-1}$, and it is fair at every $\mu_t$ by usual results. Since it is fair, by Proposition \ref{pr:noexistence}, it cannot be patience-free. Fairness implies strict non-wastefulness which implies that sequential DA is non-wasteful. Then, by Proposition \ref{pr:noexistsp} below, it cannot be strategy-proof.
\end{proof}

%Our simulations in section IV show that waiting options trigger severe delays for children in child welfare systems. Its strict non-wastefulness does not appear to outweigh the additional placements and reductions in waiting costs that patience-freeness achieves. SDA is not strategy-proof under our definition. Our proof for Theorem \ref{th:noexistsp} shows that a home that plans to accept the matchmaker's placements can, in some circumstances, increase its utility from a misreport. 
A home that expects very good placements in the future but does not want to reject prospective adoptive children can declare unacceptability for large swathes of children to avoid less desirable placements. Moreover, this alters the matchmaker's consideration set for some placements and may even, from the home's perspective, save the matchmaker from unnecessary effort costs. 

On the contrary, if the home truthfully reports, then the matchmaker might offer placements that the home would find optimal to reject. This blocks the rejected child from other placements that might be accepted in that same time period. This might also motivate homes to use preference reports to deter competition. Reporting some children as unacceptable implies that they might match to other homes and cause those homes to exit the market earlier. Preference misreports and rejections have meaningful outcome differences under SDA, and it is not clear that there is a strong incentive for homes to report truthfully.

\subsection*{III.C \hspace{10pt} Patience-Free and Strategy-Proof Mechanisms}

We turn our attention toward examining the strategic properties of our mechanisms. Our first result dispenses immediate conclusions for HPDA.

\begin{proposition} \label{pr:noexistsp}
    A weakly fair, non-wasteful and strategy-proof mechanism does not exist.
\end{proposition}

\begin{proof}
    We prove this by counterexample. $T = 2$, $h(1) = \{h\}$, $c(1) = \{c_1\}$, $h(2) = \varnothing$, and $c(2) = \{c_2\}$.

    \begin{center}
        \begin{tikzpicture}[
            roundnode/.style={circle, draw=blue!80, fill=blue!70, very thick, minimum size=6mm},
            squarednode/.style={rectangle, draw=blue!50, fill=blue!40, very thick, minimum size=7mm},
            empty/.style={rectangle, draw=white, fill=white, minimum size = 5mm},
            ]
                
            %% Nodes

            % t = 1
            \node[roundnode] (c1) {$c_1$};
            \node[squarednode] (h1) [below=of c1] {$h$};
            \node[empty] (t1) [above=0.5 of c1] {$t = 1$};

            %t = 2

            \node[roundnode] (c2) [right=of c1, draw=red!80, fill=red!70] {$c_2$};
            \node[empty] (t1) [above=0.5 of c2] {$t = 2$};
        \end{tikzpicture}
    \end{center}

    Players' preferences are:
    \begin{align*}
        &\Hat{V}_h(c_2) = 2 > \Hat{V}_h(c_1) = 1 \\
        &U_{c_1}(h) = 3/2 \\
        &U_{c_2}(h) = 2
    \end{align*}
    with $w_c = 1$ and $w_h = 1/2$. Suppose that $q$ is a non-wasteful mechanism. Consider the action profile $a_h = (1,1)$ and strategy profile $\sigma'_h(c_1) = 0,\sigma'_h(c_2) = 1$. $q$ must specify $\mu_1(h) = \varnothing$ by individual rationality. Thus, $C(2|m_1) = \{c_1, c_2\}$ and $H(2|m_1) = \{h\}$. By non-wastefulness, $\mu_2(h) \in \{c_1, c_2\}$. By weak fairness, $\mu_2(h) = c_2$. $h$ exits with payoff $2 - 1/2 = 3/2$. Now, instead, suppose that $h$ reports truthfully under the same action profile, i.e., some report $\sigma_h(c_1) = \sigma_h(c_2) = 1$. Non-wastefulness implies that $\mu_1(h) = c_1$ and $h$ exits with payoff $1$. Thus, we have that for $a_h, \sigma_h, \sigma'_h$, $\Hat{V}^r_h(a_h,\sigma'_h) = 3/2 > \Hat{V}^r_h(a_h,\sigma_h) = 1$.
\end{proof}

We gave the intuition in the benchmark. Given the broad scope of this result---and that it persists if one replaces 'weakly fair' with a rudimentary notion of Pareto efficiency---we suggest that preference truncations could be a potential mechanism explaining child welfare systems' perpetual lack of homes to care for older children and children with higher needs, and strategy-proof mechanisms could be a solution.

%The home could achieve the higher payoff with a truthful report and rejection at $t = 1$. Nevertheless, this equivalence does not eliminate rationale for misreporting. Theorem \ref{th:noexistsp} uses a simple environment where the economic theorist can find the equivalence. In more complex environments with more agents and a longer horizon, as we noted in our discussion about sequential DA, misreports and rejections can produce different outcomes. Not even the theorist might find the equivalence, suggesting scant hopes of homes finding the equivalence. Additionally, some homes might innocuously or even unconsciously err toward reporting that acceptable children are unacceptable given sufficient patience.

\begin{corollary}
    Home Penalized Deferred Acceptance and Ascending Deferred Acceptance are not strategy-proof.
\end{corollary}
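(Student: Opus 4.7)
The plan is to derive this corollary as an immediate consequence of Theorem \ref{th:noexistsp} combined with Theorems \ref{th:HPDA} and \ref{th:crda}. Theorem \ref{th:noexistsp} establishes that no mechanism can simultaneously be justified envy-free, non-wasteful, and strategy-proof. Since dynamic envy-freeness has justified envy-freeness as its first defining condition, any dynamic envy-free and non-wasteful mechanism inherits the two properties that Theorem \ref{th:noexistsp} rules out in conjunction with strategy-proofness. HPDA is dynamic envy-free and non-wasteful by Theorem \ref{th:HPDA}, and CRDA is dynamic envy-free and non-wasteful by Theorem \ref{th:crda}, so neither can be strategy-proof. The whole argument is thus a one-line citation.

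For concreteness I would also verify the conclusion directly using the exact environment from the proof of Theorem \ref{th:noexistsp}: $T = 2$, $h(1) = \{h\}$, $c(1) = \{c_1\}$, $h(2) = \varnothing$, $c(2) = \{c_2\}$, preferences $\Hat{V}_h(c_2) = 2 > \Hat{V}_h(c_1) = 1$, $U_{c_1}(h) = 3/2$, $U_{c_2}(h) = 2$, and $w_c = 1$, $w_h = 1/2$. Fix the action profile $a_h = (1,1)$. Under the truthful report $\sigma_h(c_1) = \sigma_h(c_2) = 1$, HPDA (having no prior placement at $t = 1$, hence no truncation) runs DA to produce $\mu_1(h) = c_1$, giving $h$ realized utility $1$. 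Under the misreport $\sigma'_h(c_1) = 0$, $\sigma'_h(c_2) = 1$, HPDA leaves $h$ unmatched at $t = 1$, and at $t = 2$ there is again no truncation (since $h$ rejected no prior offer), so it matches $\mu_2(h) = c_2$ for realized utility $2 - 1/2 = 3/2 > 1$. The same computation works verbatim for CRDA: with one home and one child on each side per period, its matching and rotation steps reduce to ordinary DA, so CRDA produces the identical placements. This exhibits the required profitable deviation from the truthful, always-accepting strategy.

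The only potential obstacle would be the verification that HPDA's truncation rule does not fire in the misreport scenario at $t = 2$, and the analogous check for CRDA's TRUNCATION and ROTATION phases; but in both cases the home $h$ never had a prior placement $\mu_k(h) \neq h$ to compare against (the matchmaker never offered one because $c_1$ was reported unacceptable), so truncation is vacuous. With that checked, the corollary is complete.
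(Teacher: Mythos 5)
Your proposal is correct and matches the paper's own argument, which is exactly the one-line citation: HPDA and CRDA are justified envy-free and non-wasteful (Theorems \ref{th:HPDA} and \ref{th:crda}), so Theorem \ref{th:noexistsp} rules out strategy-proofness. The additional explicit verification in the two-period environment is accurate (truncation indeed never fires since $h$ is never offered a placement it rejects) but is not needed beyond the citation.
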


\begin{proof}
    Both are weakly fair and non-wasteful. By Proposition \ref{pr:noexistsp}, neither are strategy-proof.
\end{proof}

The key manipulation that exploits HPDA is the same as in the above counterexample, but homes have even stronger incentives to misreport. Since HPDA is weakly fair and patience-free, it cannot allow $h$ to match with $c_2$ if $h$ rejects $c_1$ at $t = 1$. Hence, $h$ can only match with $c_2$ if it reports that $c_1$ is unacceptable. 

These results temper overly optimistic expectations of any non-wasteful mechanism's performance and motivate a search for a mechanism that is strategy-proof. We develop this in our next mechanism. Home Endowed Deferred Acceptance (HEDA) fixes a range of utility that a home can receive at any time period relative to its entry into the child welfare system; i.e., each home receives an endowment at each time period. Following this, we run home-proposing DA with proposals restricted to children within each home's endowment. \\~\

\textbf{Step 0:} Fix some arbitrary maximum and minimum utilities $\Bar{V}$ and $\ubar{V}$. Fix an arbitrary set of $N$ disjoint, compact intervals $(E_i)_{0 \leq i \leq N}$ spanning $[\ubar{V}, \Bar{V}]$ where the minimal element in $E_i$ is greater than the maximal element in $E_{i+1}$. Home $h$ arriving in period $k$ has a time-dependent endowment $B_t(h) \equiv E_{t-k}$ if $t - k \leq N$ and $B_t(h) \equiv \{\ubar{V}\}$ otherwise. 

\textbf{Step 1:} For any $t \geq 1$, initialize $\Tilde{V}_h(\cdot) = V_h(\cdot)$.
\begin{enumerate}
    \item Set $e_h^t(c) = 1$ if $V_h^t(c) \in B_t(h)$ and $V_h(c) \geq 0$. $e_h^t(c) = 0$ otherwise.
    \item If $e_h^t(c) = 0$, then set $\Tilde{V}_h(c) < 0$.
\end{enumerate}
    
\textbf{Step 2:} For any $t \geq 1$, using $U_c(\cdot)$ and $\Tilde{V}_h(\cdot)$ as preferences
\begin{enumerate}
    \item Each $h \in H(t|m_{t-1})$ without a child holding its proposal proposes to its best, acceptable $c \in C(t|m_{t-1})$ that it has not yet proposed to, if it can propose to any.
    \item Each $c \in C(t|m_{t-1})$ holds her best, acceptable proposal and rejects all others.
    \item Repeat 1-2 until no additional proposals are made.
    \item Set $\mu_t(c)$ to be each child's held proposal; $\mu_t(c) = c$ otherwise. Set $\mu_t(h)$ to be the child holding a home's proposal; $\mu_t(h) = h$ otherwise. \\~\
\end{enumerate}
We show that HEDA is strategy-proof, but it is not weakly fair. We hypothesize that it is possible to derive a patience-free, weakly fair, and strategy-proof mechanism, but it would exhibit serious deficiencies in wastefulness that defeat the purpose of fairness. We test the performance of HEDA in section IV.

\begin{theorem} \label{th:HEDA}
    Home Endowed Deferred Acceptance is patience-free, individually rational, and strategy-proof.
\end{theorem}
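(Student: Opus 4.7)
My plan is to verify each property in turn; individual rationality and patience-freeness follow cleanly from HEDA's construction, while strategy-proofness is the substantive step. For individual rationality, the ENDOWMENT phase sets $\Tilde{V}_h(c) = -1$ whenever $V_h(c) < 0$, so home-proposing DA in MATCHING never has $h$ propose to a $c$ with $V_h(c) < 0$, and children only hold acceptable proposals, giving both-sided acceptability of every realized match. For patience-freeness, I would exploit the strict ordering $\min E_i > \max E_{i+1}$ of the endowment intervals: by construction, if $h$ arrives at $k$ and is matched at time $t$, then $V_h^t(\mu_t(h)) \in B_t(h) = E_{t-k}$, and any counterfactual match at $t' > t$ satisfies $V_h^{t'}(\Hat{\mu}_{t'}^h(h)) \in E_{t'-k}$ (or $\{\ubar{V}\}$ when $t'-k > N$, which still lies below every $E_i$). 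The ordering then yields $V_h^t(\mu_t(h)) > V_h^{t'}(\Hat{\mu}_{t'}^h(h))$.

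For strategy-proofness, I would first prune the space of meaningful deviations. A report declaring an unacceptable $c$ as acceptable is inert: $V_h(c) = \Hat{V}_h(c) < 0$ fails the ENDOWMENT condition $V_h(c) \geq 0$, so $c$ is never proposed to, and Assumption \ref{as:nounacceptable} precludes accepting an unacceptable match. Hence it suffices to rule out profitable truncations (declaring some acceptable $c$ as unacceptable) paired with any acceptance pattern. Two ingredients then drive the argument: within each period, HEDA's MATCHING is home-proposing DA on endowment-restricted preferences, so by the classical Dubins-Freedman strategy-proofness of home-proposing DA for the proposing side, truthful reporting at $t$ yields a DA output weakly preferred to any truncation's output given a fixed market at $t$; and patience-freeness, just established, implies that rejecting a present match is strictly dominated since any future match must lie in a strictly lower endowment band.

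The main obstacle is that a truncation by $h$ can reshape upstream DA dynamics, changing other players' matches at earlier periods and hence shifting the market composition at later periods, so within-period DA strategy-proofness cannot be iterated naively. My plan to resolve this relies on an upper-bound argument keyed to the endowment structure: regardless of downstream spillovers, any match $h$ realizes at time $t$ has utility in $B_t(h)$, bounded above by $\max E_{t-k}$, and this bound strictly decreases in $t$. At $h$'s arrival period $k$, the market is determined entirely by other players' strategies and is independent of $h$'s report, so within-period DA strategy-proofness applies cleanly: either truthful matches $h$ at $k$ with a DA output weakly better than any truncation's period-$k$ output (after which patience-freeness closes the case), or truthful leaves $h$ unmatched at $k$ and by DA strategy-proofness so does any truncation, whereupon any subsequent match for $h$ is bounded above by $\max E_1 < \min E_0$, and the argument iterates on the remaining periods under the same upper-bound structure. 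Combining these pieces yields weak dominance of the truthful-plus-unconditional-accept strategy.
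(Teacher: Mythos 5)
Your individual rationality and patience-freeness arguments coincide with the paper's. The gap is in strategy-proofness, precisely at the step where you say "the argument iterates on the remaining periods under the same upper-bound structure." Within-period strategy-proofness of home-proposing DA compares truth and misreport on the \emph{same} market, and that is only guaranteed at $h$'s arrival period $k$. If truth and the truncation both leave $h$ unmatched at $k$, nothing in your plan shows that the two scenarios still face the same market at $k+1$: deferred acceptance is bossy in general, so changing $h$'s proposal list could in principle change \emph{other} homes' period-$k$ matches even while $h$ itself remains unmatched under both reports, and then $C(k+1)$ and $H(k+1)$ differ across the two scenarios. Once the markets diverge, the one-shot Dubins--Freedman comparison is unavailable, and your decreasing bound $\max E_{t-k}$ does not close the hole: it caps a period-$t$ match by its endowment band, but it does not exclude that the truncation gets $h$ matched at an \emph{earlier} period than truth does in the altered market, i.e., in a strictly higher band $E_{t'-k}$, which would be a strictly profitable manipulation.

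This is exactly the configuration the paper's proof is organized around. It does not iterate forward from the arrival period; it takes the \emph{first} period at which the matchings under the truthful, compliant pair and the alternative pair differ, proves by induction that the active children and homes coincide up to that period, and then disposes of the three possible cases at that period: its Case 1 shows, via the emptiness of the sets $S$ and $S'$, that the matchings cannot first diverge while $\mu_k(h) = \Hat{\mu}_k(h) = h$ (so the "both unmatched, others differ" scenario you would need to survive the iteration simply cannot occur); Case 2 handles an identical nontrivial match with the endowment bands, as you do; and Case 3 adapts Roth's argument to show a divergent match cannot make $h$ better off. To repair your plan you would need an analogue of that Case 1 lemma --- that a home left unmatched under both reports cannot be the source of the first divergence in the matching --- and that is a substantive combinatorial argument (the core of the paper's proof), not a routine detail that the endowment bounds supply.
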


\begin{proof}
    See Appendix A.
\end{proof}

The algorithm's success rests on the endowment's invariance to homes' strategies. Because a home always receives its best placement at the earliest period relative to its arrival, a home's optimal strategy is to accept the earliest placement. HEDA's strategy-proofness is non-trivial. In the first period where a home's misreport would cause any match for any home to differ compared to when the home is truthful, we prove that it must be that the misreporting home receives the same match under misreporting/truth-telling or else the misreporting home's match differs. In the former case, the home should accept the placement by the same logic that we use to show HEDA is patience-free; waiting until later yields a worse placement. Misreporting cannot increase the home's utility since the first placement is the same as under truth-telling. The latter case reduces to demonstrating that HEDA is strategy-proof in a static sense; we show this using a novel, amended proof strategy based on \qcite{roth-2017}. Misreporting cannot improve the home's utility even when the home's match changes. Tying the two cases together yields the theorem.

HEDA is not weakly fair because it partitions homes' potential placements using endowments that are ex-ante fixed. Therefore, it is possible for a home and child to mutually benefit from an endowment violation. Moreover, HEDA is not non-wasteful for the same reason. Even a compliant home might not receive a placement at $t$ if no children exist within its endowment, and some other acceptable children outside of that home's endowment might not be placed elsewhere. These two features are the primary downsides of HEDA. 
%Still, HEDA is strategy-proof. It incentivizes homes to carefully and truthfully report preferences. It eliminates \textit{false negatives}: homes that would adopt a child yet report the child as unacceptable. If false negatives are prevalent in child welfare systems, as we conjecture, then HEDA advances as a forefront solution that increases the number of home willing to adopt children that might otherwise experience long waits. %Still, HEDA has many attractive properties. Homes that do not receive placements eventually disperse into separate endowments and, when children outnumber homes, it is unlikely that a home would not find at least one acceptable match. Since HEDA is patience-free, whenever a home receives a placement, it will accept that placement, ameliorating some wastefulness concerns. Finally, strategy-proofness incentivizes homes to carefully and truthfully report preferences. It eliminates \textit{false negatives}: homes that would adopt a child yet report the child as unacceptable. If false negatives are prevalent in child welfare systems, as we conjecture, then HEDA advances as a forefront solution that increases the number of home willing to adopt children that might otherwise experience long waits.

Proposition \ref{pr:noexistsp} cautions against pursuing a weakly fair, patience-free, and strategy-proof mechanism. Patience-freeness and non-wastefulness imply that any unfilled capacity exists to penalize noncompliant homes. In contrast, a patience-free mechanism without the binding restrictions of non-wastefulness---as a weakly fair, patience-free, and strategy-proof mechanism must be---may be arbitrarily unfair. %Consider this possible direction to extend HEDA to dynamic envy-freeness: at every period after running DA, dissolve placements that cause justified envy for any home. One cannot rearrange the placements to match the home and child with justified envy if the home would require a higher endowment; this would violate patience-freeness. Therefore, eliminating the placement is necessary, but surely this is less fair than allowing the placement to persist. 

HEDA's structure is amenable to another useful alteration. We have assumed that the matchmaker has access to a reliable calibration for $w_h$, but this may not hold for every county. We propose HEDA* when $w_h$ is unknown: it sets a home's endowment $e_h^t(c) = 1$ if $V_h(c) \in B_t(h)$ rather than relying on the home's time-discounted utility. HEDA* does not differ theoretically from HEDA so Theorem \ref{th:HEDA} still applies\footnote{HEDA is equivalent to running HEDA* with the same endowment interval bounds shifted upward by $(t-k)w_h$.}. %HEDA is generally better than HEDA* when waiting costs are known.  HEDA* has a limited number of endowments before a home must receive a zero utility placement which harms the number of placements that homes in the bottom endowment can receive. HEDA can have an arbitrarily large number of endowments so that a home's utility for a placement can always be in any arbitrary interval, and a home's chances of receiving a placement are much greater.

\section*{IV \hspace{10pt} Simulation Results}

In this section, we simulate adoption markets from the U.S. empirical distribution observed in a national panel of foster care placements. We obtain children's preferences through an AI model that predicts an objective welfare measure for each match from another work, and we posit multiple preference models for homes. Finally, we benchmark each mechanism's placements, waiting costs, and justified envy. 

\subsection*{IV.A \hspace{10pt} Data}

The Adoption and Foster Care Analysis and Reporting System (AFCARS, \qcitenp{afcars-data}) is a U.S. database tracking case-level information for all foster care children managed through entities required to report through Title IV-E. This includes nearly every child that has been placed in a foster home. This data spans all children from September 30th, 1999 to September 30th, 2021. \qcite{highsmith2024matching} cleans and transforms this data set into Placement Files that track data for 2,335,517 matches observed at the end of the AFCARS reporting period. We refer the reader to that work for the technical details. In addition, we restrict the Placement Files to only foster children with a stated case goal of adoption and terminated parental rights. We also include only non-relative foster homes which are the relevant adoptive homes in this setting.

We use this data to simulate two years ($T = 24$) of children and homes arriving to a child welfare system. Incentives for patience mechanically decrease as $t$ approaches $T$. In practice, dynamic incentives persist because most people anticipate an infinite or extremely long horizon. Our results focus on the first year where dynamic incentives are sufficiently strong. 

The set of simulated markets is $M$, and the number of simulations is $|M| = 100$. Each market $m$ is a randomly selected county with at least $120$ placements from 1999 to 2021 in the Placement Files. The set of children and homes arriving at $t$ in market $m$ are $c_m(t)$ and $h_m(t)$, respectively, which are randomly sampled without replacement from children and homes in the random county. We fix the number of arrivals to five per period so that each simulation has $120$ children and homes over the entire horizon. We focus on two characteristics for foster children: age and clinically diagnosed disability status. We observe both in the Placement Files and denote them $a(c)$ and $d(c)$, respectively.
\subsection*{IV.B \hspace{10pt} Preferences}

The main goal for most child welfare systems is to guarantee a safe, healthy permanent placement for all children. One threat to permanency is placement instability. After the matchmaker places a child in a home, the home becomes the child's \textit{pre-adoptive} home. The home must foster the child for six months before the adoption is finalized. If the home decides not to adopt the child, it returns the child to the welfare system, i.e., the placement is not stable. We refer to this as persistence to avoid confusion with the theoretic property of stability. Persistence is widely accepted as an objective metric of child welfare and is sometimes the only easily accessible and assessable metric (\qcitenp{rubin-2007}).

In our model, a safe and healthy placement is one where the child is acceptable to the home and the home is acceptable to the child. The matchmaker can define the acceptability threshold to include only homes that would adequately care for the child. Conditional on a placement being acceptable, the matchmaker's utility for a child $c$ matching with a home $h$, $U_c(h)$, is the predicted persistence probability. \qcite{highsmith2024matching} trains a Random Forest (RF) model to perform these predictions that achieves 77\% accuracy. Moreover, the same work demonstrates that match-specific characteristics strongly impact the predictions. We adopt this model and set
\begin{align*}
    U_c(h) = p(c,h)
\end{align*}
where $p(c,h)$ is the fraction of trees that vote in favor of persistence for the $(c,h)$ match from the RF model.

Next, we turn to home preferences. We do not have access to homes' preferences and are not aware of any current attempts to elicit cardinal utilities from homes. We assume that homes have utilities that depend on child age and disability, and we discuss below our method to assess the impact of imprecisely predicting home preferences. The utility of a home $h$ for a match $c$ is:
\begin{align*}
    \Hat{V}_h(c) = 1 - \eta\Big(\frac{a(c)}{18}\Big)^2 - \gamma \delta_h d(c)
\end{align*}
Across the United States, younger children almost always find placements much sooner than older children, and a county we interviewed suggested that, ceteris paribus, homes strongly prefer to adopt younger children. We model this as vertical differentiation among children: younger children are more desirable to homes. The cost for a child's age is quadratic as placement outcomes for children tend to worsen rapidly as a child approaches eighteen. Further, in our interviews, the same county reported that some homes are greatly dissuaded from adopting children with any disability, while others have more openness. We use this to model horizontal differentiation where $\delta_h \sim U[0,1]$. We test the following preference models:

\textit{Complete Information}---We set $\Hat{V}_h(c)$ to be the matchmaker's preference estimator as discussed below.

\textit{Vertical Differentiation}---($\eta = 1, \gamma = 0$). All homes have the same preferences across children that depends on the child's age.

\textit{Horizontal Differentiation}---($\eta = 0, \gamma = 1$). Homes have different preferences across children that depend on the child's disability status.

\textit{Mixed Differentiation}---($\eta = 1, \gamma = 1$). Homes have vertical differentiation that is the same across all children, and homes have horizontal differentiation across children.

We calibrate $w_c = \frac{14000}{12}$ as the median Title-IV payment per out-of-home foster youth per year to a county is approximately \$14,000. Finally, we set $w_h = 0.04$ so that a home's tradeoff for achieving the "ideal" match $\Hat{V}_h(c) = 1$ is about two years (25 months).

\subsubsection*{Estimating Home Preferences}

We cannot elicit nor structurally estimate home preferences. Instead, we attempt to predict them and aim to understand how effective our mechanisms can be when the matchmaker uses a preference estimator that is, at best, correlated with a home's true utility. When patience-free mechanisms operate with an inaccurate estimator, this invalidates the assumption---complete information---that allows the mechanism to eliminate dynamic incentives. Furthermore, when patience-freeness fails, homes may deviate from accepting placements which implies the failure of weak non-wastefulness. These two issues could compound to seriously harm the number of placements.

The Placement Files record the number of different placements a child has experienced. Movements between foster homes are usually not benign. \qcite{koh2014explains} documents that a majority of movements are initiated when a foster home requests a placement change because of child behavior problems. These behavior problems or underlying pathologies that cause them are likely to persist across homes, meaning that it is plausible that the number of placements for a child could correlate with match utility for homes.

We estimate a simple and frankly naive linear model:
\begin{align*}
    l(c) = \alpha + \beta_1a(c) + \beta_2d(c) + \epsilon_c
\end{align*}
$l(c)$ is the observed number of placements for child $c$. For a child $c$ in a market $m$, we transform this into home utility with the following function:
\begin{align*}
    V_h(c) = 1-  \frac{l(c)}{l_m}
\end{align*}
where $l_m$ is the 90th percentile of the number of placements among all children in the county of market $m$. The problems with this model are numerous: it is ad-hoc, causally unidentified, hides heterogeneity among home preferences, and more. We reiterate that our goal is not to construct a good estimator. Our objective is to demonstrate that a good estimator, while beneficial, is not necessary, and it is possible to implement our mechanisms with heuristic solutions rather than multi-year, intensive research efforts to identify preferences. We also foresee using far more characteristics of children and homes for more sophisticated machine learning models in real-world application.

We report the empirical root mean squared error (RMSE) for the matchmaker's estimator where:
\begin{align*}
    \text{RMSE} = \sqrt{\frac{\sum_{h \in H(t) \forall t} \sum_{c \in C(t) \forall t} \left( \Hat{V}_h(c) - V_h(c) \right)^2}{\sum_{t=1}^T |C(t)| * \sum_{t=1}^t |H(t)|}}
\end{align*}
and we will show the strength of the correlation between the matchmaker's estimator and home preferences.

Finally, how does a home decide to accept or decline a placement? If we compute the home's equilibrium strategy, the computation requires fixing an arbitrary action profile $a(T)$ then checking which homes maximize their utility. A home then updates its strategy if the change would yield higher immediate utility or higher utility in the future given the market's evolution and future placements. Yet, this changes the placements that might occur for other homes in the future. We would have to repeat this computation until $a(T)$ no longer updates. Unfortunately, this calculation is too expensive. 

Instead, we borrow an insight from patience-freeness. In the complete information case, accepting the first placement is a weakly dominant strategy for all homes. We assume that a home $h$ takes it as given that all other homes will accept their placements, and a home $h$ accepts its placement if, when all other homes accept their placements, it maximizes its utility through accepting the placement rather than waiting to accept a future placement.

\subsubsection*{Decentralized Matching Mechanism}

Most counties in the U.S. follow a family-driven matching model. When an adoptive child needs a placement, the child's caseworker announces the child to all licensed adoptive homes. The first home that responds to the request is prioritized for the match. In larger time scales such as a month, caseworkers might announce needs for multiple adoptions, and homes respond in a dictatorial order determined by a first-come-first-serve order. We model the time component as random so that this process is a Random Serial Dictatorship and use this mechanism to simulate decentralized matching outcomes. \\~\

\textbf{Step 1:} For any $t \geq 1$, using $\Hat{V}_h(\cdot)$ as preferences
\begin{enumerate}
    \item Randomly choose a home $h \in H(t|m_{t-1})$ that has not been chosen yet.
    \item Set $\mu_t(h)$ to be $h$'s favorite, acceptable $c \in C(t|m_{t-1})$ that has not been matched yet, if any exists.
    \item Repeat 1-2 until no children or no homes remain.\\~\
\end{enumerate} 
We refer to RSD as decentralization henceforth. One important advantage that decentralization confers is that it does not depend on the matchmaker's estimator. Yet, it is an unanswered empirical question if this translates into more adoptions than mechanisms which require preference data. Our main results provide an answer: decentralization is suboptimal.
\subsection*{IV.C \hspace{10pt} Main Results}

\begin{table}[t!]
\centering
\caption{Statistical Measures for Estimator Fit}
\begin{tabular}{ l@{\hskip 5ex} c@{\hskip 5ex} c@{\hskip 5ex} c }
\hline
\hline
\textit{Measure} & Vertical & Horizontal & Mixed \\ \hline
$\rho$ & 0.933 & 0.344 & 0.839 \rule{0pt}{3ex} \\
$RMSE$ & 0.348 & 0.446 & 0.472 \rule{0pt}{3ex} \\
$R^2$ & -0.331 & -1.388 & -0.074 \rule{0pt}{3ex} \\ \hline \hline
\end{tabular}
\label{tab:estimator-fit}
\end{table}

We compare each centralized mechanisms against decentralization. Our outcomes focus on three measures: cumulative placements, cumulative child waiting costs, and the percentage of placed homes with justified envy at the time of placement. Fixing some $m_t$, the number of placed home with justified envy on $\mu_k \in m_t$ is $|J(k)|$. At some time $k \leq t$, the last measure is $\sum_{i=1}^{k} |J(i)| / H(i|m_{i-1})$.

For each simulation, we run each mechanism on the same market. We set HEDA's endowments at a width of $1/3$ for the first three months, then the width reduces to $w_h$ for all future months. We run simulations for every market $m \in M$ and average over all simulations in our graphs. Our simulations create four three-way panels where an outcome $Y_{m,t,q}$ depends on the market $m$, time period $t$, and the mechanism $q$, and each panel is for a different preference type.

Averaging across our one hundred markets in Table \ref{tab:estimator-fit}, the matchmaker's estimator has consistently high RMSE and consistently negative $R^2$, meaning that the sample mean achieves a closer fit than the estimator. However, our heuristic estimator approximates preferences that decrease as child age increases. The correlation coefficient $\rho$ demonstrates a strong fit for vertical and mixed preferences. Horizontal preferences do not correlate strongly with the estimator.

\subsubsection*{Outcomes under Weakly Fair and Patience-Free Mechanisms}

Our first main result contrasts HPDA and ADA to decentralized matching. We estimate the following OLS model to estimate the effects on placements. We run the model on a panel combining observations from all preference types with a column indicating the preference model. We drop observations other than HPDA, ADA, and decentralized matching for this comparison.
\begin{align}
    Y_{m,t,q} = \alpha &+ \beta_{0, e,g} PM_{e,g} + \beta_1 HPDA + \beta_2 ADA \\
    &+ \beta_{1,e,g} HPDA * PM_{e,g} + \beta_{2,e,g} ADA*PM_{e,g}\notag \\
    &+ \epsilon_{m,t,Q}\notag
\end{align}
where $PM_{e,g} =\mathbf{1}\{ \eta = e, \gamma = g \}$, and $(e,g) \in \{ (1, 0), (0, 1), (1,1) \}$. The outcome $Y_{m,t,q}$ is the number of placements in market $m$ at time $t$ under mechanism $q$. 

%The parameter $\beta_1$ identifies the effect of HPDA (compared to decentralization) on per-period placements. $\beta_{1,e,g}$ identifies the effect of the preference type $(e,g)$ relative to known preferences, and $\beta_{2,e,g}$ is the incremental effect of HPDA when the preference type is $(e,g)$. 
Our theoretical results prove that HPDA eliminates dynamic incentives when preferences are known, but they are silent on the empirical impact on the number of placements as well as the potential effects when cardinal utilities are not exactly known. In comparison to decentralization---where all homes know their own preferences and match accordingly---the latter force could create mismatches that harms the number of accepted placements when using centralized mechanisms. 

\begin{table}[t!]
    \centering
    \caption{Effect on Placements of HPDA and ADA versus Decentralization}
    \begin{tabular}{ l@{\hskip 5ex} c@{\hskip 5ex} l@{\hskip 5ex} c@{\hskip 5ex} l@{\hskip 5ex} c }
    \hline
    \hline
    \textit{Parameter} & & \textit{HPDA} & & \textit{ADA} \\ \hline
    $\alpha$ & 3.005\rule{0pt}{3ex} & $\beta_1$ & 1.370 & $\beta_2$ & 1.370 \\
    & (0.039) & & (0.056) & & (0.056) \\
    $\beta_{0,1,0}$ (Vertical) & 0.028 & $\beta_{1,1,0}$ & -0.212 & $\beta_{2,1,0}$ & -0.145\rule{0pt}{3ex} \\
    & (0.056) & & (0.079) & & (0.079) \\
    $\beta_{1,0,1}$ (Horizontal) & 0.133 & $\beta_{1,0,1}$ & -0.422 & $\beta_{2,0,1}$ & -0.576\rule{0pt}{3ex} \\
    & (0.056) & & (0.079) & & (0.079) \\
    $\beta_{1,1,1}$ (Mixed) & -0.044 & $\beta_{1,1,1}$ & -0.329 & $\beta_{2,1,1}$ & -0.354\rule{0pt}{3ex} \\
    & (0.056) & & (0.079) & & (0.079) \\ \hline \hline
    \end{tabular}
    \vspace{1ex}
    %{\raggedright \small \centering \textit{Note:} $^{*}$p$<$0.1; $^{**}$p$<$0.05; $^{***}$p$<$0.01 \par}
    \label{tab:placements-hpda-rsd}
\end{table}

Table \ref{tab:placements-hpda-rsd} presents the results. HPDA and ADA increase average per-period placements by 1.37 or 45.67\% relative to the decentralized placements $\alpha = 3.005$ under known preferences\footnote{The set of unmatched agents is the same in every fair matching by the Rural Hospitals Theorem. Therefore, when all agents accept the first placement, the number of placements will be equivalent over time as is with HPDA and ADA under known preferences.}. This effect is highly statistically and economically significant. It is statistically impossible for the matchmaker's estimator to infer horizontal differentiation, yet even under this model, yet even in this worse case ($\beta_2 + \beta_{2,0,1}$), placements still increase by 0.794 (26.5\%). Running the same regression model from eq. (1) with $Y_{m,t,Q}$ equal to per-period waiting costs, we find that this increase in placements is equivalent to at least \$5,805 average decrease in monthly expenditure. %Horizontal preferences exhibit more placements versus the matchmaker's preference estimator, but mixed preferences exhibit less. Overall, the point estimates for the interaction between vertical/mixed preferences and HPDA are negative, which is expected given the introduction of prediction error, but they are statistically indistinguishable from zero. Horizontal preferences damage the positive effects the most. It is statistically impossible for the matchmaker's estimator to infer horizontal differentiation, yet even under this model, placements still increase by 0.828 per period.

\subsubsection*{Effects of Centralization}

The evidence above suggests that HPDA and ADA are improvements over decentralization. The evidence does not distinguish between the impact of centralization and the impact of dynamic incentives. A fair centralized mechanism could increase placements through better match targeting when some children and homes have aligned preferences\footnote{Consider a simple 2x2 case where $h_1$ and $c_1$ mutually prefer each other most, but $h_2$ prefers $c_1$ over $c_2$. Suppose that $h_2$ is unacceptable to $c_2$. A fair matching will always yield weakly more placements than a one-sided Pareto efficient matching.}. Moreover, a strictly non-wasteful mechanism will offer more placements than a patience-free mechanism. We turn to a comparison between SDA and RSD to examine the effects of centralization and fairness which implies strict non-wastefulness We begin with the same equation as (1) but modified for DA
\begin{equation}
    Y_{m,t,Q} = \alpha + \beta_{0, e,g} PM_{e,g} + \beta_1 SDA + \beta_{1,e,g} SDA * PM_{e,g} + \epsilon_{m,t,Q}
\end{equation}
Table \ref{tab:placements-da-rsd} shows results for all placements as the outcome. All coefficients for DA effects are either negative and statistically significant or zero and statistically indistinguishable from zero, except for $\beta_{1,0,1}$ which indicates that SDA has a small, positive impact under horizontal differentiation. These results indicate that fairness alone does not guarantee more nor faster child adoptions. Dynamic incentives emerge as the primary market inefficiency in this setting.
\begin{table}[t!]
    \centering
    \caption{Effect on Placements of SDA versus Decentralization}
    \begin{tabular}{ l@{\hskip 5ex} c@{\hskip 10ex} l@{\hskip 5ex} c }
    \hline
    \hline
    \textit{Parameter} &  \\ \hline
    $\alpha$ & 3.005 & $\beta_1$ & 0.037 \rule{0pt}{3ex} \\
    & (0.051) & & (0.072) \\
    $\beta_{0,1,0}$ (Vertical) & 0.028 & $\beta_{1,1,0}$ & -0.032\rule{0pt}{3ex} \\
    & (0.072) & & (0.103) \\
    $\beta_{0,0,1}$ (Horizontal) & 0.133 & $\beta_{1,0,1}$ & 0.348\rule{0pt}{3ex} \\
    & (0.072) & & (0.103) \\
    $\beta_{0,1,1}$ (Mixed) & -0.044 & $\beta_{1,1,1}$ & -0.243\rule{0pt}{3ex} \\
    & (0.072) & & (0.103) \\ \hline \hline
    \end{tabular}
    \label{tab:placements-da-rsd}
\end{table}
Our last consideration for SDA is its effect on persistence. One might expect that SDA would yield higher persistence rates than all other mechanisms. First, fairness ensures that there are no placed children that could achieve better predicted persistence at a home that would prefer a child over its match. Second, strict non-wastefulness will always propose matches to available homes even with prediction error, meaning that the market is thicker under SDA than a patience-free mechanism. We estimate eq. (1) and eq. (2) with average predicted persistence as the outcome. Surprisingly, Table \ref{tab:persistence} refutes these intuitions. In fact, HPDA has a statistically significant, positive effect on persistence (5\% increase) while SDA has a statistically significant, negative effect on persistence (3.6\% decrease). However, SDA's decrease is not consistent over preference models.

\begin{table}[hbt!]
\centering
\caption{Persistence for All Mechanisms}
\begin{tabular}{l@{\hskip 6ex}c@{\hskip 6ex}c@{\hskip 6ex}c@{\hskip 6ex}c}
\hline
\hline
\textit{Preferences} & Complete & Vertical & Horizontal & Mixed \\ \hline
\textit{Average persistence} \\
RSD & 0.89 & 0.9 & 0.86 & 0.9 \\
SDA & 0.86 & 0.9 & 0.94 & 0.88 \\
HPDA & 0.94 & 0.94 & 0.94 & 0.94 \\
ADA & 0.94 & 0.94 & 0.93 & 0.94 \\
HEDA & 0.92 & 0.92 & 0.91 & 0.91 \\ \hline
\end{tabular}

\label{tab:persistence}
\end{table}

This result implies that the placements that HPDA and ADA encourage homes to accept are the exact placements that are good for child welfare and that are likely to persist, indicating some amount of home welfare as well. Patience-freeness delivers more and better placements.

\subsubsection*{Comparing All Mechanisms}

\textit{Placements}---Our placement results are very promising for HEDA. Despite its lack of weak non-wastefulness, it differs very little from HPDA and ADA across all specifications. Similarly, the mechanisms---SDA and decentralization---that are not patience-free track each other. This suggests that patience-freeness is an important property for guaranteeing not only \textit{expedient} placements but also \textit{more} placements. Without patience-freeness, homes effectively always have an incentive to delay before encountering a satisficing placement. As $T$ approaches infinity, the constant expectation that a better placement might arrive tomorrow induces a consistent downward pressure on the amount of potential placements. Patience-freeness eliminates such incentives.

% We test the differences between each mechanism's effect on placements by estimating the following equation using outcomes from the vertical, horizontal, and mixed preference models:
% \begin{equation}
%     Y_{m,t,Q} = \alpha + \beta_1 q + \beta_2 q_{SDA} + \beta_3 q_{HEDA} + \epsilon_{m,t,Q}
% \end{equation}
% Using Student's t-test, we can reject the hypotheses that $\beta_1 = \beta_2$ and that $\beta_2 = \beta_3$. The effects of HPDA and HEDA are statistically distinguishable from SDA.

Furthermore, in Table \ref{tab:placements},  we show that these benefits confer directly to child populations that currently face difficulty in finding adoptive homes: teens and children with clinically diagnosed disabilities. We show average placement outcomes for all children, teenagers, and children with disabilities. The percent of teenagers placed at $t$ is the number of teenagers with an accepted placement divided by the number of teenagers in the market at $t$. The average percent of teenagers placed is the average over all $t$. The same holds for children with disabilities.

One stark result is under most preferences, neither SDA nor decentralization successfully place more than 15\% of teenagers nor children with disabilities into permanent homes. Most percentages are substantially lower, especially for teens when homes have vertical preferences and for children with disabilities when homes have horizontal preferences.

\begin{figure}[t!]
    \centering
    \includegraphics[width=1\linewidth]{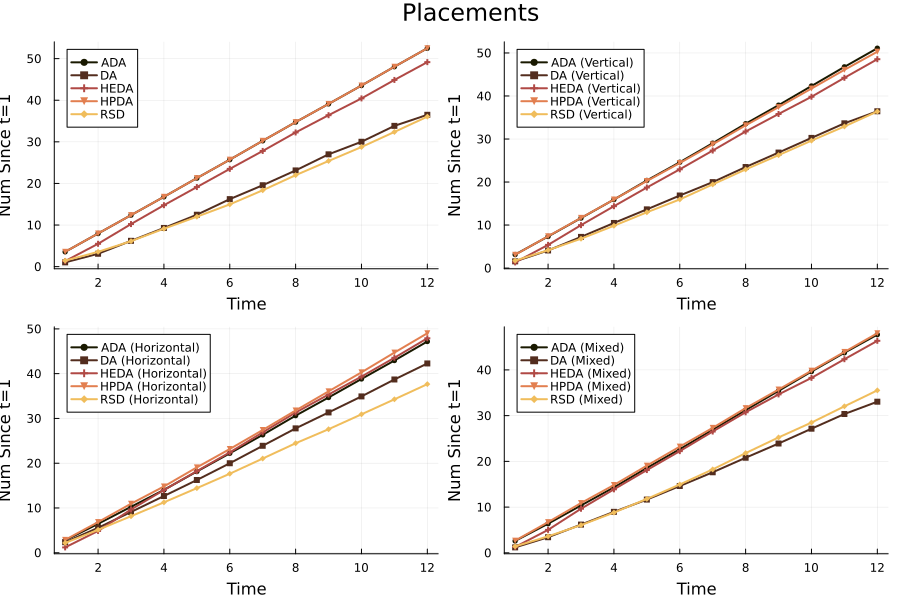}
    \caption{Placements, All Mechanisms}
    \label{fig:placements}
\end{figure}

%HPDA's placement increase observed in the baseline eq. (1) is largely driven by increasing the number of placements for teens by 1.111 per period and children with diabilities by 0.867 per period. However, the effect is diminished for teens under horizontal preferences; homes do not care about age, so the relative effect between HPDA and RSD is small. $\beta_{2,e,g}$ is not statistically distinguishable from zero for any type, implying that the effect for children with disabilities is not sensitive to the preferences.

Patience-free mechanisms improve outcomes for teenagers---HPDA and ADA so by 21-23 percentage points under vertical preferences and 15-16 percentage points under mixed preferences. Additionally, both mechanisms increase teen placements even with horizontal preferences because older age weakly correlates with clinically diagnosed disabilities in our Placement Files. SDA is slightly more successful at placing children with disabilities under horizontal preferences at 8\%. However, ADA improves by 14 pp and whereas HPDA and HEDA improve by 20 pp. Our mechanisms suffer from prediction error across all preference models with unknown preferences. All placements decline and waste increases.

\begin{table}[hbt!]
\centering
\caption{Placements for All Mechanisms}
\begin{tabular}{l@{\hskip 6ex}c@{\hskip 6ex}c@{\hskip 6ex}c@{\hskip 6ex}c}
\hline
\hline
\textit{Preferences} & Complete & Vertical & Horizontal & Mixed \\ \hline
\textit{Average placements} \\
RSD & 3.01 & 3.03 & 3.14 & 2.96 \\
SDA & 3.04 & 3.04 & 3.52 & 2.75 \\
HPDA & 4.38 & 4.19 & 4.09 & 4.0 \\
ADA & 4.38 & 4.26 & 3.93 & 3.98 \\
HEDA & 4.1 & 4.05 & 3.99 & 3.86 \\
\textit{Average percent, teens}\rule{0pt}{6ex} \\
RSD & 5\% & 2\% & 15\% & 2\% \\
SDA & 7\% & 2\% & 16\% & 2\% \\
HPDA & 33\% & 23\% & 22\% & 18\% \\
ADA & 33\% &25\% & 20\% & 17\% \\
HEDA & 20\% & 18\% & 18\% & 15\% \\
\textit{Average percent, disabled}\rule{0pt}{6ex} \\
RSD & 11\% & 14\% & 4\% & 7\% \\
SDA & 12\% & 15\% & 8\% & 6\% \\
HPDA & 46\% & 36\% & 24\% & 23\% \\
ADA & 46\% & 39\% & 18\% & 21\% \\
HEDA & 31\% & 30\% & 24\% & 23\% \\
\textit{Average waste}\rule{0pt}{6ex} \\
RSD & 4.51 & 5.24 & 5.59 & 5.86 \\
SDA & 4.51 & 4.51 & 4.51 & 4.51 \\
HPDA & 4.51 & 5.45 & 5.88 & 6.23 \\
ADA & 4.51 & 4.77 & 5.24 & 5.23 \\
HEDA & 7.05 & 7.29 & 7.28 & 7.82 \\ \hline
\end{tabular}
\vspace{0.1ex}

{\raggedright \small Note: waste is the number of homes in $H(t)$ that did not receive a placement offer. Some homes are unplaced in every stable match which explains DA's positive waste. \par}

\label{tab:placements}
\end{table}

Waste has its global maximum under HEDA with mixed preferences. This empirical observation is consistent with HEDA lacking weak non-wastefulness. We also see that HPDA's waste exceeds RSD when preferences are unknown, whereas ADA's waste never exceeds RSD, validating capacity efficiency. Nevertheless, this does not seem to significantly impact the number of placements that are accepted. This may be explained by the fact that ADA uses child-proposing DA, meaning that its matchings are home-pessimal, and it is more likely that a home can gain if they wait when preferences are unknown. Last, DA is strictly non-wasteful and offers as many placements as possible under all preference models. 

\textit{Waiting Costs}---Our results for child waiting costs (Figure \ref{fig:waitingcosts} in Appendix B) imposed on the child welfare system mirror the placement results as more placements mechanically implies less waiting. The gap tends to widen over time because the stock of unplaced children increases more quickly under SDA and decentralization. There is a natural limit to the increase in the gap as children eventually age out of the child welfare system which we do not capture in our model and figures.

\textit{Justified Envy}---SDA, HPDA, and ADA are weakly fair under complete information as our theoretical results proved. Even in this case, decentralized matching has a large amount of justified envy at over 60\% of homes. Unfortunately, this implies that some children are be able to receive placements that are more likely to persist, and there are homes that would adopt those children. HEDA exhibits about 10-20\% justified envy across all preferences. Correspondingly, Table \ref{tab:persistence} shows that its average persistence is slightly above RSD but below HPDA and ADA.

\begin{figure}[bt!]
    \centering
    \includegraphics[width=1\linewidth]{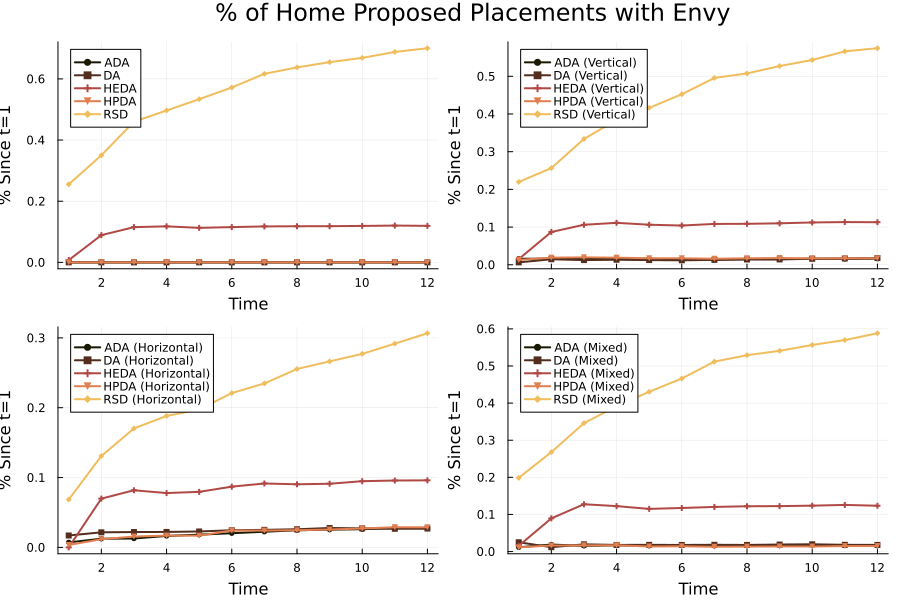}
    \caption{Justified Envy}
    \label{fig:envyfig}
\end{figure}

Under any preference model other than complete information, justified envy appears for every mechanism, although all centralized mechanisms have significantly less justified envy than decentralization. Justified envy is near zero for weakly fair mechanisms.

\subsection*{IV.D \hspace{10pt} Applied Insights}

We conclude with additional recommendations and concerns for mechanism implementation.

\subsubsection*{Centralization and Child Welfare}

Our simulations demonstrate that centralization is insufficient. Our initial hypothesis was that some inefficiencies in the child welfare system owed to a lack of coordination in adoption matching and that introducing a clearinghouse could ameliorate this. We find that a clearinghouse alone would not increase the number of placements nor would it improve child welfare. Our results suggest that dynamic incentives are the prevailing market inefficiency. This ought to inform practical efforts to decrease the number of waiting children: embracing any new matching mechanisms or tools that target match quality without addressing dynamic incentives is not likely to be substantially effective.

A centralized clearinghouse is costly. We are aware of two frictions. First, there are no information technology systems for adoption clearinghouses. The child welfare system is generally underfunded and does not direct many resources towards digital infrastructure. Implementing a centralized mechanism will require a well-designed system that must at minimum: provide platformed interaction and coordinated placements between caseworkers and homes, enforce placements, allow preference reporting, predict child persistence and home preferences, and integrate with back-end county systems. Our research demonstrates that limited preference elicitation and prediction are feasible; the remainder is up to the practitioner to implement. Last, the model and technology must be properly fitted. For example, although we propose a one-to-one matching mechanism, for some time scales it might be prudent to propose multiple matches to one home in one period. For HEDA, this does not theoretically differ because it is equivalent to offering only the best match from the set. Practically, it could be a benefit if there is prediction error or if per-match response rate is low. 

Second, as we discuss in section III.B, the regulatory laws governing who can make and to whom placement offers can be made substantially differ across counties. However, patience-free mechanisms only have participation guarantees when there are no outside options. It is important to consider how a clearinghouse might interact with the regulatory environment.

\subsubsection*{Prediction in Matching Theory}

Our next insight is that prediction is satisfactory. Much work in matching theory focuses on strategy-proofness which guarantees simple preference elicitation, but that is generally impractical for cardinal utilities and impossible when attempting to mitigate dynamic incentives. Economists have devised other empirical strategies to estimate preferences, but most if not all have onerous data requirements that a typical county is unlikely to satisfy. Comparatively, there are ready-made tools for prediction problems and a national dataset with millions of observations upon which tools can be trained. Our main results show that a predictor that is modestly correlated with preferences improves upon decentralized matching despite enormous degrees of error. We note that we did not analyze strategic incentives when there is prediction error, and some of the theoretical results might not hold in this case. We conjecture that predictor bias is irrelevant as it would not affect proposal orders under HEDA, but high variance in the errors might induce misreporting.

\section*{V \hspace{10pt} Conclusion}

Child welfare systems lack organized attempts to centralize placement assignments for children. Previous efforts to construct match recommendation systems have not gone as hoped, yet the area still seems to hold significant promise to be an arena for market design, and many economists have studied adjacent problems to improve outcomes for children. We show that the 109 thousand children waiting for adoption in the United States, and certainly many more abroad, could have permanent homes---now. Nevertheless, even with centralization, matchmakers must choose carefully between mechanisms that prioritize different goals. Patience-free mechanisms always increase the number of placements and decrease waiting costs in our simulations. Strictly non-wasteful mechanisms emphasize proposing all possible placements, but they come at the price of lower persistence for adoptions, less placements for vulnerable children, and less placements overall. We offer at least one mechanism---HEDA---that increases adoptions, generates low to moderate waste, and satisfies strategy-proofness, meaning that it prevents incentives for homes that can care for high-needs or older children to report that they cannot. 

Several obstacles remain between current practice and smooth centralization. Caseworkers report that previous designs failed because of poor coding, match recommendations based on irrelevant characteristics, and many other operational concerns. Going forward, we desire that child welfare systems will conquer the challenge of designing better matching for the many waiting children everywhere.

\section*{Appendix A: Proofs}

Below, we prove the theorems and propositions that we did not prove in the main text.

\textbf{Theorem \ref{th:HPDA}.} Home Penalized Deferred Acceptance is weakly fair, patience-free, and non-wasteful.
\begin{proof}
    First, we show that HPDA is weakly fair for any action profiles at any time. Then, we prove that HPDA is patience-free. Last, we show that HPDA is non-wasteful. 
    
    (i) HPDA is weakly fair for any $m_{t-1}$. Suppose for a contradiction that for some $m_{t-1}$, we have that $\mu_t \equiv \mu_{t,m}^q$ is not weakly fair. This implies that there exists some $c \in C(t|m_{t-1})$ and $h \in H(t|m_{t-1})$ with $\mu_t(h) \neq h$ such that $U_c(h) > U_c(\mu_t(c))$ and $V_h(c) > V_h(\mu_t(h))$. However, this implies that $h$ did not have its preferences truncated at $t$, otherwise $\mu_t(h) = h$. Therefore, during the operation of HPDA, $h$ proposed to $c$. Since $\mu_t(c) \neq h$, this implies that $c$ rejected $h$. This contradicts $U_c(h) > U_c(\mu_t(c))$. %Crucially, we do not assume that $\mu_t(c) \neq c$, proving Remark \ref{re:hpda-strict}.

    (ii) A non-individually rational match is impossible because no home would propose to an unacceptable child, and no child would hold a proposal from an unacceptable home.

    (iii) HPDA is patience-free. Toward a contradiction, suppose that it is not. Then, for some $m_{t-1}$, there exists an $h \in H(t|m_{t-1})$ such that $V_h^t(\mu_t(h)) > V_h^k(\mu_k(h))$ for some $k < t$ and $\mu_k(h) \neq h$. However, $V_h^t(\mu_t(h)) > V_h^k(\mu_k(h))$ implies that $p_h^t = 1$, but then $h$ did not propose to $\mu_t(h)$, a contradiction.

    (iv) Last, HPDA is non-wasteful. Suppose, for a contradiction, the contrary. Since $a_h^t = 1$ for all $h,t$, if $\mu_k(h) \neq h$ for some $k < t$, then $h \notin H(t|m_{t-1})$. Therefore, it cannot be that any home has its preferences truncated. But by our supposition, there exists some $c \in C(t|m_{t-1})$ and $h \in H(t|m_{t-1})$ such that $U_c(h) \geq 0$, $V_h(c) \geq 0$, $\mu_t(c) = c$, and $\mu_t(h) = h$. The child and homes' non-negative utilities for the match and $\mu_t(h) = h$ imply that, at some point, $h$ proposed to $c$ and $c$ had no proposals. However, $\mu_t(c) = c$ implies that $c$ rejected $h$, which contradicts $U_c(h) \geq 0$. This completes the theorem.
\end{proof} 

\textbf{Theorem \ref{th:ADA}.} Ascending Deferred Acceptance is weakly fair, patience-free, non-wasteful, and capacity efficient.
\begin{proof}
(i) ADA is weakly fair for any $m_{t-1}$. Suppose for a contradiction that for some $m_{t-1}$, we have that $\mu_t$ is not weakly fair. This implies that there exists some $c \in C(t|m_{t-1})$ and $h \in H(t|m_{t-1})$ with $\mu_t(h) \neq h$ such that $U_c(h) > U_c(\mu_t(c))$ and $V_h(c) > V_h(\mu_t(h))$. However, this implies that $h$ did not have its preferences truncated at $t$ in the last round $n$, otherwise $\mu_t(h) = h$. Therefore, during round $n$, $c$ proposed to $h$. Since $\mu_t(c) \neq h$, this implies that $h$ rejected $c$. This contradicts $V_h(c) > V_h(\mu_t(h))$.

    (ii) ADA is individually rational. The proof is exactly the same as in HPDA.

    (iii) ADA is patience-free. Toward a contradiction, suppose that it is not. Then, for some $m_{t-1}$, there exists an $h \in H(t|m_{t-1})$ such that $V_h^t(\mu_t(h)) > V_h^k(\mu_k(h))$ for some $k < t$ and $\mu_k(h) \neq h$. However, $V_h^t(\mu_t(h)) > V_h^k(\mu_k(h))$ implies that $h$'s preferences are truncated in some round $n$, but then $h$ did not propose to $\mu_t(h)$ in round $n+1$, a contradiction.

    (ii) ADA is non-wasteful. The proof is exactly the same as in HPDA.

    (v) ADA is capacity efficient. First, we note key results. 

    \qcite{gusfield_irving_1989}, Theorem 1.2.3. \textit{In the man-optimal stable matching, each woman has the worst partner she can have in any stable matching.} Let $H^n = \{ h \in H(t|m_{t-1}) : \exists c' \in C(t|m_{t-1}) \text{ such that } \tilde{V}_h^n(c') > 0 \ \}$. The matching $\mu_t^n$ is the outcome of child-proposing DA in round $n$, which is the child-optimal fair (stable) matching on $C(t|m_{t-1})$ and $H^n$. Therefore, each home in $H^n$ has the worst child match it can have in any fair matching on $C(t|m_{t-1})$ and $H^n$.

    \qcite{crawford1991comparative}, Theorem 1. (We translate this theorem directly into our setting.) \textit{Let $H' \subseteq H$, $\mu$ the child-optimal stable matching on $C$ and $H$, and $\mu'$ the child-optimal stable matching on $C$ and $H'$. Then $V_h(\mu(h)) \leq V_h(\mu'(h))$ for all $h \in H'$.} Enlarging the set of homes makes all homes weakly worse off; conversely, shrinking the set of homes makes all homes weakly better off.

    Suppose, for a contradiction, that $\mu_t$ is not capacity efficient for some $m_{t-1}$. Then, there exists some $\mu' \in \phi_{t,m}$ such that $\kappa(\mu'|H(t|m_{t-1})) \subset \kappa(\mu_t|H(t|m_{t-1}))$. Let the complement of $\kappa$ be
    \[\kappa'(\mu|H) = \{ h \in H : \mu(h) \neq h\}\]
    and
    \[D = \kappa'(\mu'|H(t|m_{t-1})) - \kappa'(\mu_t|H(t|m_{t-1}))\]
    Since $\kappa'$ is the complement of $\kappa$, $D$ is non-empty. Consider the earliest round $n$ such that $\tilde{V}^{n+1}_h(c') < 0$ for all $c' \in C(t|m_{t-1})$ for some $h \in D$. It is immediate that $\kappa'(\mu_t|H(t|m_{t-1})) \cup D \subseteq H^n$. Therefore, $\kappa'(\mu'|H(t|m_{t-1})) \subseteq H^n$. Note again that $\mu_t^n$ is the child optimal (home pessimal) fair matching on $C(t|m_{t-1})$ and $H^n$, and define $\mu''$ as the child optimal (home pessimal) fair matching on $C(t|m_{t-1})$ and $\kappa'(\mu'|H(t|m_{t-1}))$. So, for all $h \in D$, we have
    \[V_h(\mu''(h)) \geq V_h(\mu_t^n(h)) > V_h(\mu_k(h))\]
    for some $\mu_k \in m_{t-1}$. The first inequality follows by \qcite{crawford1991comparative}, and the second by assumption that $h$ is truncated in round $n$.
    
    $\mu'$ is weakly fair. So, it is fair on $C(t|m_{t-1})$ and $\kappa'(\mu'|H(t|m_{t-1}))$ (if not, the contradiction is immediate since there will be a blocking pair $c,h$ where $\mu'(h) \neq h$). Since $\mu''$ is home-pessimal, by \qcite{gusfield_irving_1989}, we have
    \[V_h(\mu'(h')) \geq V_h(\mu''(h'))\]
    for all $h' \in \kappa'(\mu'|H(t|m_{t-1}))$. Combining this and the above inequalities:
    \[V_h(\mu'(h)) > V_h(\mu_k(h))\]
    This contradicts $\mu' \in \phi_{t,m}$.
    
    This concludes the Theorem.

\end{proof}

\textbf{Proposition \ref{pr:dominant}.} Accepting the first placement is a weakly dominant strategy for all homes under a patience-free mechanism.

\begin{proof}
    A mechanism $q$ and actions $a(T)$ define a sequence of market matchings where $m_{1,a(1)}^q = (a(1), q(M^1))$, and 
    \[m^q_{t, a(t)} = (a(t),  m^q_{t-1,a(t-1)}, q(m^q_{t-1,a(t-1)}, M^t))\]
    A matching at period $t$ for these market matchings is $\mu_{t,a(T)}^q = q(m^q_{t-1,a(t-1)}, M^t)$.
    Fix a patience-free mechanism $q$ and an action profile $a(T)$ where $a_h^t = 1$ for all $t$. Let $t_h$ be the first period when $\mu_{t,a(T)}^q(h) \neq h$. Since $\mu_t = q(m_{t-1,a(t-1)}^q, M^t)$, we have that the matching at time $t$ does not depend on actions at any time $t' \geq t$. Therefore, for any alternative action $b^{t'}_h$ where $t' \geq t_h$, we have that $\mu_{t_h}$ does not change. Define $b(T)$ equal to $a(T)$ except for $b^{t'}_h$. By patience-freeness, $V_h^{t'}(\mu_{t',b(T)}^q(h)) \leq V_h^{t_h}(\mu_{t_h,b(T)}^q(h)) = V_h^{t_h}(\mu_{t_h,a(T)}^q(h))$. This proves the Proposition.
    %Fix a patience-free mechanism $q$. For any action profile $a(T)$ and home $h$, let $t$ be the first time that $\mu_t(h) \neq h$, if this ever occurs. By patience-freeness, $V_h^t(\mu_t(h)) \geq V_h^{t'}(\mu_{t'}^{t,h}(h))$ for any $t' > t$. If $a_h^t = 1$, then $h$ receives payoff $V_h^t(\mu_t(h))$. Let $\Bar{a}_h$ be any other strategy for $h$. Define $\Bar{a}$ where $\Bar{a}_h$ is $h$'s strategy and $\Bar{a}_{h'} = a_{h'}$ for all $h' \neq h$. We proved in Theorem 2 that up until and including the first period $t'$ where $\Bar{a}_h^{t'} = 1$ and $h$ is offered a match, the matchings under this profile and any deviation are equivalent (if $\Bar{a}_h^k = 0$ for all $k$ or $h$ never accepts another match, then $h$ receives payoff $0$.). Therefore, the payoffs are equivalent as the same placement will be accepted under these profiles, and so $h$ is weakly better off with $a_h^t = 1$. 
\end{proof}

\textbf{Theorem \ref{th:HEDA}}. Home Endowed Deferred Acceptance is patience-free, individually rational, and strategy-proof.

\begin{proof}
    First, we show that HEDA is individually rational for any action profiles at any time. Then, we prove that HEDA is patience-free. Last, we show that HEDA is strategy-proof. For individual rationality and patience-freeness, we take as given the input preferences and prove the properties with respect to the input.
    
    (i) An non-individually rational match is impossible because no home would propose to an unacceptable child, and no child would hold a proposal from an unacceptable home.

    (ii) HEDA is patience-free. Toward a contradiction, suppose it is not. Then, for some $m_{t-1}$,, there exists an $h \in H(t|m_{t-1})$ such that $V_h^t(\mu_t(h)) > V_h^k(\mu_k(h))$ for some $k < t$ and $\mu_k(h) \neq h$. Since $\mu_t(h) \neq h$, we have that $e_h^t(\mu_t(h)) = 1 \iff V_h^t(\mu_t(h)) \in B_t(h)$. Similarly, $V_h^k(\mu_k(h)) \in B_k(h)$. This is a contradiction because $k < t$ implies $V_h^t(\mu_t(h)) < V_h^k(\mu_k(h))$.
    
    %Consider any arbitrary $t$,$h$, $a(T)$ such that $\mu_t(h) \neq h$. For any arbitrary $t' > t$, take the $h$-deviation $m_{t'}^{t,h}$ with matching at time $t'$ defined as $\Hat{\mu}_{t'}^h$. If no matching occurs at $t'$ under the deviation, then trivially $V_h^t(\mu_t(h)) \geq V_h^{t'}(\Hat{\mu}_{t'}^h(h))$. Otherwise, suppose that $\Hat{\mu}_{t'}^h(h) \neq h$. Since $\mu_t(h) \neq h$, we have that $e_h^t(\mu_t(h)) = 1 \iff V_h^t(\mu_t(h)) \in B_t(h)$. Likewise, $\Hat{\mu}_{t'}^h(h) \neq h \iff e_h^{t'}(\Hat{\mu}_{t'}^h(h)) = 1 \iff V_h^{t'}(\Hat{\mu}_{t'}^h(h)) \in B_{t'}(h)$. Then by definition of HEDA and $t' > t$, $V_h^{t}(\mu_{t}(h)) \geq V_h^{t'}(\Hat{\mu}^h_{t'}(h))$

    (iii) HEDA is strategy-proof. Take any arbitrary $a_{-h}(T),\sigma_{-h}$. Let $\sigma_h$ be truthful and $a_h(T)$ be compliant, i.e., $a_h^t = 1 \hspace{5pt} \forall t$. Denote $h$'s realized utility under this plan as $\Hat{V}_h^r(a(T), \sigma)$.
    
    Consider any alternative pair for $h$: $(\Hat{a}_h(T),\Hat{\sigma}_h)$. Then $\Hat{a}(T)$ is the profile with the specified action for $h$ and $\Hat{a}_{h'}(T) = a_{h'}(T) \hspace{5pt} \forall h' \neq h$. Likewise for $\Hat{\sigma}$. Denote $h$'s realized utility under the alternative pair as $\Hat{V}_h^r(\Hat{a}(T), \Hat{\sigma})$. We prove the theorem by showing that in the first period where matchings differ under the two strategy pairs, it must be that $h$'s match cannot improve. Under HEDA, $h$ maximizes its utility through accepting the earliest possible match, so any subsequent match under the alternative pair is also worse. The theorem follows. 
    
    We use a few last pieces of notation for the proof. The market matching at $k$ given the truthful pair is $m_k = (m_{k - 1}, a(k), q(m_{k-1}, M^k(\sigma))$ where $m_1 = (a(1), q(M^1(\sigma)))$. The matching at time $k$ under this alternative pair is $\mu_k \equiv q(m_{k-1}, M^k(\sigma))$.
    
    The market matching at $k$ given the alternative pair is $\hat{m}_k = (\hat{m}_{k - 1}, \hat{a}(k), q(\hat{m}_{k-1}, M^k(\hat{\sigma}))$ where $\hat{m}_1 = (\hat{a}(1), q(M^1(\hat{\sigma})))$. The matching at time $k$ under this alternative pair is $\hat{\mu}_k \equiv q(\hat{m}_{k-1}, M^k(\hat{\sigma}))$. Abusing notation, we redefine the sets of children and homes at each time as those endogenously determined by the mechanism, actions, and strategies:
    \begin{align*}
        C(t) &\equiv C(t|m_{t-1}) \\
        H(t) &\equiv H(t|m_{t-1}) \\
        A^C(t) &= \bigcup_{k=1}^t c(t) - C(t)\\
        A^H(t) &= \bigcup_{k=1}^t h(t) - H(t)
    \end{align*}
    and
    \begin{align*}
        \Hat{C}(t) &\equiv C(t|\hat{m}_{t-1}) \\
        \Hat{H}(t) &\equiv H(t|\hat{m}_{t-1}) \\
        \Hat{A}^C(t) &= \bigcup_{k=1}^t c(t) - \Hat{C}(t)\\
        \Hat{A}^H(t) &= \bigcup_{k=1}^t h(t) - \Hat{H}(t)
    \end{align*}
    Take the first period $k$ where $\mu_k \neq \Hat{\mu}_k$, if one exists. If one does not exist, the theorem follows trivially. Otherwise, we first prove that $A^C(k') = \Hat{A}^C(k')$ and $A^H(k') = \Hat{A}^H(k') \hspace{5pt} \forall k' < k$. Suppose, for a contradiction, that this is not true. If it is not true for the latter, then take the earliest $k'$ where there exists a $h' \in A^H(k'), h' \notin \Hat{A}^H(k')$ or $h' \notin A^H(k'), h' \in \Hat{A}^H(k')$. If $h' = h$, this would imply that $h \in A^H(k'), h \notin \Hat{A}^H(k')$ (since $h$ must accept the placement under the truthful, compliant pair), but then $h$'s realized utility is greater under the truthful, compliant pair by (ii). Thus, we can assume WLOG that $h' \neq h$. However, $\mu_{k'}(h') = \Hat{\mu}_{k'}(h')$ by assumption, and $\Hat{a}_{h'}^k = a_{h'}^k$. This is a contradiction, so, WLOG, $A^H(k') = \Hat{A}^H(k')$. Next, suppose this is not true for the former, i.e., take the earliest $k'$ where there exists a $c \in A^C(k'), c \notin \Hat{A}^C(k')$ or $c \notin A^C(k'), c \in \Hat{A}^C(k')$. If $c = \mu_{k'}(h)$, then, again, $h$'s realized utility is greater under the truthful, compliant pair so we can assume, WLOG, that $c \neq \mu_{k'}(h)$. Yet, $\mu_{k'}(c) = \Hat{\mu}_{k'}(c)$ by assumption, and $\Hat{a}_{\Hat{\mu}_{k'}(c)}^k = a_{\mu_{k'}(c)}^k$. This contradicts $A^C(k') \neq \Hat{A}^C(k')$, so, WLOG, $A^C(k') = \Hat{A}^C(k')$. These two facts imply that $C(k) = \Hat{C}(k)$ and $H(k) = \Hat{H}(k)$.
    
    We now have that the set of children and homes available to match at $k$ are the same under either strategy pair. Returning to the period $k$, we split into three cases. Either $\mu_k(h) = \Hat{\mu}_k(h) = h$, $\mu_k(h) = \Hat{\mu}_k(h) \neq h$ or $\mu_k(h) \neq \Hat{\mu}_k(h)$. 
    
    \textbf{Case 1}: Suppose, for a contradiction, that it is true. Then for some $h' \neq h$, $\mu_k(h') \neq \Hat{\mu}_k(h')$. If $\Tilde{V}_{h'}(\Hat{\mu}_k(h')) > \Tilde{V}_{h'}(\mu_k(h'))$, this implies the existence of some $h'' \neq h',h$ with $\mu_k(h'') \equiv c'' \neq h''$ and $\Hat{\mu}_k(h') = c''$. If not, then $\mu_k(c'') = c''$, implying that $h'$ proposed to this child, but the child rejected $h'$ under $a(T),\sigma$. Then, $h'$ is unacceptable to this child, contradicting $\Hat{\mu}_k(c'') = h'$. Furthermore, it must be that $\Tilde{V}_{h''}(\Hat{\mu}_k(h'')) > \Tilde{V}_{h''}(\mu_k(h''))$. If not, this implies $h''$ proposed to $c''$ under $\Hat{a}(T), \Hat{\sigma}$ but was rejected in favor of $h'$. But then $h'$ and $c''$ form a blocking pair under $a(T), \sigma$, meaning that $\mu_k$ is not stable w.r.t the constructed preferences, which is a contradiction.

    Define $S \equiv \{ \Hat{h} : \Tilde{V}_{\Hat{h}}(\Hat{\mu}_k(\Hat{h})) > \Tilde{V}_{\Hat{h}}(\mu_k(\Hat{h})) \}$ and suppose it is non-empty. Consider the first round of proposals at time $k$ under $a(T), \sigma$ where some $\Hat{h} \in S$ is rejected by $\Hat{c} \equiv \Hat{\mu}_k(\Hat{h})$ in favor of some $\Tilde{h}$. If $\Tilde{h}$ not has proposed to $\Hat{\mu}_k(\Tilde{h})$ or is unmatched, this implies $\Tilde{V}_{\Tilde{h}}(\Hat{c}) > \Tilde{V}_{\Tilde{h}}(\Hat{\mu}_k(\Tilde{h}))$ and $U_{\Hat{c}}(\Tilde{h}) > U_{\Hat{c}}(\Hat{h})$. Then, $\Hat{c}$ and $\Tilde{h}$ form a blocking pair on $\Hat{\mu}_k$, which contradicts DA's stability on the constructed preferences. Therefore, it must be that $\Tilde{h}$ has proposed to and been rejected by $\Hat{\mu}_k(\Tilde{h}) \implies \Tilde{V}_{\Tilde{h}}(\Hat{\mu}_k(\Tilde{h})) > \Tilde{V}_{\Tilde{h}}(\Hat{c}) >  \Tilde{V}_{\Tilde{h}}(\mu_k(\Tilde{h})) \implies \Tilde{h} \in S$. However, by assumption that this is the first round that any home in $S$ was rejected, it cannot be that $\Tilde{h}$ has been rejected yet. This is a contradiction; $S$ must be an empty set.

    Instead, if $\Tilde{V}_{h'}(\Hat{\mu}_k(h')) < \Tilde{V}_{h'}(\mu_k(h'))$, there must exist some $h'' \neq h',h$ with $\Hat{\mu}_k(h'') \equiv c'' \neq h''$ and $\mu_k(h') = c''$. We also we have that $\Tilde{V}_{h''}(\Hat{\mu}_k(h'')) < \Tilde{V}_{h''}(\mu_k(h''))$. All of the above follows from the same logic as before. We can define $S' \equiv \{ \Hat{h} : \Tilde{V}_{\Hat{h}}(\Hat{\mu}_k(\Hat{h})) < \Tilde{V}_{\Hat{h}}(\mu_k(\Hat{h})) \}$ and suppose that it is non-empty. Consider the first round of proposals at time $k$ under $\Hat{a}(T), \Hat{\sigma}$ where some $\Hat{h} \in S'$ is rejected by $\Hat{c} \equiv \mu_k(\Hat{h})$ in favor of some $\Tilde{h}$. If $\Tilde{h}$ not has proposed to $\mu_k(\Tilde{h})$ or is unmatched, this implies $\Tilde{V}_{\Tilde{h}}(\Hat{c}) > \Tilde{V}_{\Tilde{h}}(\mu_k(\Tilde{h}))$ and $U_{\Hat{c}}(\Tilde{h}) > U_{\Hat{c}}(\Hat{h})$. Then, $\Hat{c}$ and $\Tilde{h}$ form a blocking pair on $\mu_k$, which contradicts DA's stability on the constructed preferences. Therefore, it must be that $\Tilde{h}$ has proposed to and been rejected by $\mu_k(\Tilde{h}) \implies \Tilde{V}_{\Tilde{h}}(\mu_k(\Tilde{h})) > \Tilde{V}_{\Tilde{h}}(\Hat{c}) > \Tilde{V}_{\Tilde{h}}(\Hat{\mu}_k(\Tilde{h})) \implies \Tilde{h} \in S'$. However, by assumption that this is the first round that any home in $S'$ was rejected, it cannot be that $\Tilde{h}$ has been rejected yet. This is a contradiction; $S'$ must be an empty set.

    Since $S$ and $S'$ are empty, for every $h' \neq h$, it must be that $\Tilde{V}_{h'}(\mu_k(h')) = \Tilde{V}_{h'}(\Hat{\mu}_k(h'))$. Since preferences are strict, it cannot be that $\mu_k(h') \neq \Hat{\mu}_k(h')$ for any $h'$. Hence, case 1 is impossible.

    \textbf{Case 2:} In this case, since $h$ always accepts the first match under $a(T)$, and $\mu_k(h) = \Hat{\mu}_k(h) \neq h$, it must be that this is the first period that $h$ receives a match under either $a(T),\sigma$ or $\Hat{a}(T),\Hat{\sigma}$. Then, $\Hat{V}_h^r(a(T), \sigma) = \Hat{V}_h^k(\mu_k(h))$. If $h$ accepts the match under $\Hat{a}(T)$, $h$ receives equivalent utility. If not, $h$ receives some utility $\Hat{V}_h^r(\Hat{a}(T), \Hat{\sigma}) = \Hat{V}_h^n(\Hat{\mu}_n(h)) < \Hat{V}_h^r(a(T), \sigma) = \Hat{V}_h^k(\mu_k(h))$ in a period $n > k$. If $h$ is not matched at $n$, the inequality holds trivially. If $h$ is matched at $n$, we have that $\Hat{\mu}_n(h) \neq h \iff e_h^n(\Hat{\mu}_n(h)) = 1 \iff \Tilde{V}_h^n(\Hat{\mu}_n(h)) \in B_n(h) \implies \Hat{V}_h^n(\Hat{\mu}_n(h)) \in B_n(h)$. The last implication holds because $e_h^n(\Hat{\mu}_n(h)) = 1 \implies \Tilde{V}_h(\cdot) > 0 \implies \Tilde{V}_h(\cdot) = \Hat{V}_h(\cdot) \implies \Tilde{V}_h^n(\cdot) = \Hat{V}_h^n(\cdot)$. Similarly, $\Hat{V}_h^k(\mu_k(h)) \in B_k(h)$. By definition of HEDA, $h$ must have higher realized utility from accepting the match in period $k$, and the inequality holds.

    \textbf{Case 3:} Last, we adapt \qcite{roth-2017}'s proof to show that $h$'s utility cannot decrease.  To ease notation, we write $v_h(\cdot) \equiv \Tilde{V}_h(\cdot|\sigma_h)$ and $\Hat{v}_h(\cdot) \equiv \Tilde{V}_h(\cdot|\Hat{\sigma}_h)$ (note for all $h' \neq h$, constructed preferences are the same under $\sigma$ and $\Hat{\sigma}$). Define $S \equiv \{ \Hat{h} : \Hat{v}_{\Hat{h}}(\Hat{\mu}_k(\Hat{h})) > v_{\Hat{h}}(\mu_k(\Hat{h})) \}$ and $R \equiv \{ c : \Hat{\mu}_k(c) \in S \}$. 

    For a contradiction, suppose that $h \in S$. (A), we show that $c \in R \iff \mu_k(c) \in S$. (B) We show that a contradiction arises from our supposition that $h \in S$.

    Consider (A $\implies$). Suppose that $h' \in S \iff c' \equiv \Hat{\mu}_k(h') \in R$. Let $h'' \equiv \mu_k(c')$. If $h'' = h$, the statement follows trivially. If not, we know that $v_{h''}(\cdot) = \Hat{v}_{h''}(\cdot)$. Furthermore, $h' \in S \iff \Hat{v}_{h'}(c') = v_{h'}(c') > v_{h'}(\mu_k(h'))$ where the equality follows because if $h' = h$, $\Hat{v}_{h'}(c') > v_{h'}(\mu_k(h')) \geq 0 \implies \Hat{v}_{h'}(c') = \Hat{V}_{h'}(c') = v_{h'}(c')$, and it follows trivially if $h' \neq h$. Then, because DA is stable on the constructed preferences and $\mu_k(h') \neq c'$, it must be that $U_{c'}(h'') > U_{c'}(h')$. However, again by DA's stability on $\Hat{a}(T),\Hat{\sigma}$, it must be that $V_{h''}(\Hat{\mu}_k(h'')) > V_{h''}(c') \iff \mu_k(c') = h'' \in S$

    (A $\impliedby$) $\mu_k(c) \in S \implies c \in R$ if and only if the contrapositive is true, that is, $c \notin R \implies \mu_k(c) \notin S$. By the above proof, for every $h' \in S$, there exists exactly one $h'' \in S$ with $\mu_k(h'') \equiv c''$ such that $\Hat{\mu}_k(h') = c''$. Define the $h''$ satisfying this for $h'$ as $s(h')$. It must be that $\mu_k(s(h')) = c'' \in R$ because $\Hat{v}_{h'}(c'') > v_{h'}(\mu_k(h'))$. Furthermore, for two $h_1 \neq h_2$ it cannot be that $s(h_1) = s(h_2)$ because this would imply $\Hat{\mu}_k(h_1) = \Hat{\mu}_k(h_2)$. 
    
    Suppose, for a contradiction, that some $c \notin R$ and $h_c \equiv \mu_k(c) \in S$. It cannot be that $s(h_p) = h_c$ for any $h_p \in S$ as $s(h_p) = h_c \implies \mu_k(h_c) \in R$. Notice, then, $S$ contains $|S|$ elements that have some $s(\cdot)$. But then there are $|S| - 1$ homes that can satisfy $s(\cdot)$ at most. By the pigeonhole principle, at least one home must satisfy $s(\cdot)$ for at least two homes. As we proved, this is impossible. Hence, the contrapositive must be true, and (A $\impliedby$) is true.

    (B) Consider the last round that some arbitrary $h_l \in S$ proposes under $a(T), \sigma$. By definition, $h_l$ must propose to $c_l \equiv \mu_k(\Hat{h})$. For all $h' \in S$, $\Hat{\mu}_k(h')$ rejects $h'$. Furthermore, $h' \in S \implies \mu_k(h') \in R \implies \mu_k(h') \neq h'$, so it must be that every $h'$ has a proposal after its rejection by $\Hat{\mu}_k(h')$. $h_l \in S \implies c_l \in R$ which implies that $c_l$ receives a proposal from $\Hat{h} \equiv \Hat{\mu}_k(c_l)$. Since this is the last round of proposals for all homes in $S$, $c_l$ rejected $\Hat{h}$ in favor of some $h_s \notin S$ in a previous round of proposals (if $h_s \in S$, this means $h_s$ is rejected in favor of $h_l$ and proposes to another child in a future round, which contradicts this being the last round of proposals for homes in $S$). Then, $U_{c_l}(h_s) > U_{c_l}(\Hat{h})$. Moreover, $h_s$ proposed to $c_l$ before $\mu_k(h_s)$, so $v_{h_s}(c_l) > v_{h_s}(\mu_k(h_s)) > v_{h_s}(\Hat{\mu}_k(h_s))$. Since $h_s \notin S \implies h_s \neq h$, $v_{h_s}(\cdot) = \Hat{v}_{h_s}(\cdot)$, and $h_s$ must propose to $c_l$ under $a(T),\sigma$ as well as $\Hat{a}(T),\Hat{\sigma}$. This implies that $h_s$ and $c_l$ form a blocking pair on $\Hat{\mu}_k$, a contradiction. Therefore, it cannot be that $h \in S$.

    Finally, $h \notin S \implies \Hat{V}_h(\mu_k(h)) = v_h(\mu_k(h)) > \Hat{v}_h(\Hat{\mu}_k(h)) = \Hat{V}_h(\Hat{\mu}_k(h))$ where the last equality again holds because $\Hat{v}_h(\Hat{\mu}_k(h)) \geq 0$ by IR. Since $v_h(\mu_k(h)) > 0$, $h$ must receive a match. Since $a_h(T)$ is compliant and this is the first period where the matchings differ under $a(T), \sigma$ and $\Hat{a}(T), \Hat{\sigma}$, this implies this is $h$'s first match under either pair. $h$ will accept this match and receive utility $\Hat{V}_h^r(a(T), \sigma) = \Hat{V}_h^k(\mu_k(h)) > \Hat{V}_h^k(\Hat{\mu}_k(h))$. Hence, even if $h$ accepts the match under $\Hat{a}(T), \Hat{\sigma}$, $h$ receives strictly lower utility. If $h$ waits until a future period to accept a match under $\Hat{a}(T), \Hat{\sigma}$, $h$ receives lower utility by Case 2. This completes the theorem.

\end{proof} 

\section*{Appendix B: Additional Figures}

\begin{figure}[H]
    \centering
    \includegraphics[width=1\linewidth]{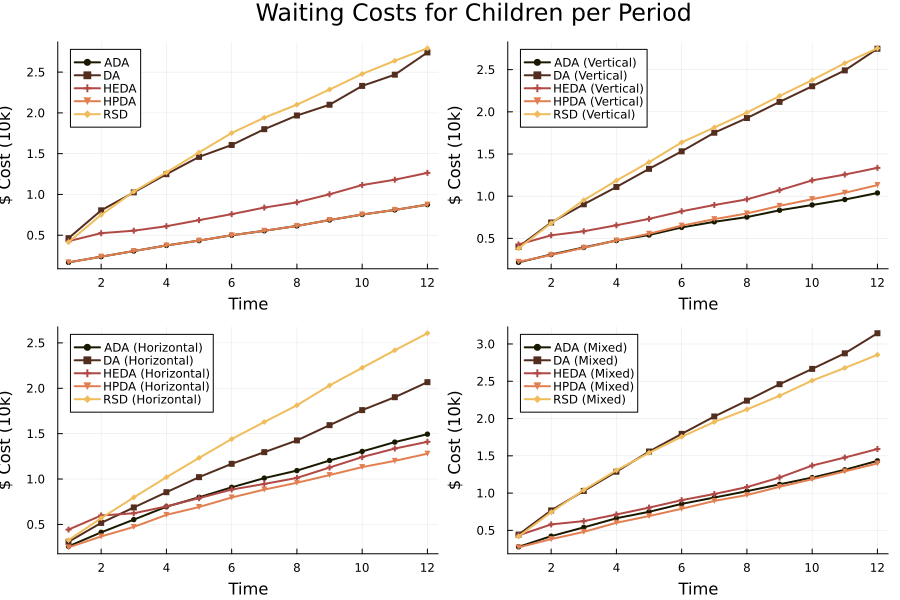}
    \caption{Waiting Costs}
    \label{fig:waitingcosts}
\end{figure}

\printbibliography

\end{document}